\documentclass[11pt,english]{article}
\usepackage[T1]{fontenc}
\usepackage[latin9]{inputenc}
\usepackage{float}
\usepackage{units}
\usepackage{textcomp}
\usepackage{amsmath}
\usepackage{amssymb}
\usepackage{graphicx}
\usepackage{algorithm}
\usepackage[noend]{algorithmic}
\usepackage{caption}
\usepackage{subcaption}

\makeatletter

%%%%%%%%%%%%%%%%%%%%%%%%%%%%%% LyX specific LaTeX commands.
\floatstyle{ruled}
\newfloat{algorithm}{tbp}{loa}
\providecommand{\algorithmname}{Algorithm}
\floatname{algorithm}{\protect\algorithmname}

%%%%%%%%%%%%%%%%%%%%%%%%%%%%%% User specified LaTeX commands.

\usepackage{latexsym}\@ifundefined{definecolor}
 {\usepackage{color}}{}
\usepackage{mathtools}\usepackage{amsthm}\usepackage{epsfig}\usepackage{algorithmic}\usepackage{wasysym}\usepackage{wrapfig}\usepackage{appendix}\usepackage{fancyhdr}\usepackage{enumerate}
\usepackage{times,euscript}

\usepackage{fullpage}

\def\reals{{\mathbb R}}
\def\eps{{\varepsilon}}
\def\A{{\cal A}}

\def\D{{\cal D}}

\def\Q{{\cal Q}}

\def\bd{{\partial}}

\def\graph{\EuScript{G}}
\def\dfrechet{\delta^*}

% Make @ behave as per catcode 13 (active).  TeXbook p. 155.
%\mathcode`@="8000 
%{\catcode`\@=\active\gdef@{\mkern1mu}}
\def\@{\,}

\newtheorem{theorem}{Theorem}[section]
\newtheorem{lemma}[theorem]{Lemma}
\newtheorem{corollary}[theorem]{Corollary}
\newtheorem{observation}[theorem]{Observation}

\def\paragraph#1{\medskip\noindent\textbf{#1}}

\makeatother
\date{}
\usepackage{babel}
\begin{document}
\begin{titlepage}
\title{A faster algorithm for the discrete Fr\'echet distance under translation\thanks{
Work by Haim Kaplan has been supported by Israel Science Foundation grant no. 822/10, and
German-Israeli Foundation for Scientific Research and Development (GIF) grant no. 1161/2011, and Israeli Centers of Research Excellence (I-CORE) program (Center  No. 4/11).
Work by Micha Sharir has been supported 
by Grant 892/13 from the Israel Science Foundation, 
by the Israeli Centers of Research Excellence (I-CORE)
program (Center No.~4/11),
and by the Hermann Minkowski--MINERVA Center for Geometry at Tel Aviv
University.
Work by Micha Sharir and Rinat Ben Avraham has been supported 
by Grant 2012/229 from the U.S.-Israeli Binational Science Foundation.
}}

\author{Rinat Ben Avraham%
\thanks{School of Computer Science, Tel Aviv University, Tel Aviv 69978, Israel;
\texttt{rinatba@gmail.com}%
} \and Haim Kaplan%
\thanks{School of Computer Science, Tel Aviv University, Tel Aviv 69978, Israel;
\texttt{haimk@post.tau.ac.il}%
} \and Micha Sharir%
\thanks{School of Computer Science, Tel Aviv University, Tel~Aviv 69978,
Israel; \texttt{michas@post.tau.ac.il }%
} }
\maketitle

\begin{abstract}
The discrete Fr\'echet distance is a useful similarity measure for comparing two sequences of points $P=(p_1,\ldots, p_m)$ and $Q=(q_1,\ldots,q_n)$. In many applications, the quality of the matching can be improved if we let $Q$ undergo some transformation relative to $P$. In this paper we consider the problem of finding a translation of $Q$ that brings the discrete Fr\'echet distance between $P$ and $Q$ to a minimum. We devise an algorithm that computes the minimum discrete Fr\'echet distance under translation in $\mathbb{R}^2$, and runs in $O(m^3n^2(1+\log(n/m))\log(m+n))$ time, assuming $m\leq n$. This improves a previous algorithm of Jiang et al.~\cite{JXZ08}, which runs in $O(m^3n^3 \log(m + n))$ time.

\end{abstract}
\end{titlepage}

\section{Introduction}
\label{sec:introduction}
Shape matching is an important area of computational geometry, that has
applications in computer vision, pattern recognition, and other
fields that are concerned with matching objects by shape similarity.
Generally, in shape matching we are given two geometric objects $A$ and $B$ and we want to measure to what extent they are similar.  Usually we may allow certain transformations,
like translations, rotations and/or scalings, of one object relative to the other, in order to improve the quality of the match.

In many applications, the input data consists of finite sets of points sampled from the actual objects.
To measure similarity between the sampled point sets,
 various distance functions have been used.  One popular function  is the Hausdorff distance
that equals to
 the maximum distance from a point in one set to
 its nearest
point in the other set.
%\begin{figure}[htb]
%\centering\begin{tabular}{cc}
%\includegraphics[scale=1]{figures/} & \hspace{0.5cm}
%\includegraphics[scale=1]{figures/} \\
%(a) & (b)
%\end{tabular}
%\centering \caption{\small (a) Two sequences of points that are similar under the Fr\'echet distance (and also under the Hausdorff distance). (b) Two sequences of points that are not similar under the discrete Fr\'echet distance. However, their Hausdorff distance is small. The ordering of the points along the sequences makes the difference.}
%\label{fig:hausdorff}
%\end{figure}
However, when the objects which we compare are curves, sequences,
 or contours of larger objects, and the sampled points are ordered along the
compared contours,
 the discrete Fr\'echet
distance may be a more appropriate similarity measure. This is because the discrete Fr\'echet
distance takes into account the ordering of the
points along the contours which the Hausdorff distance ignores.
 Comparing curves and sequences is a major task that arises in computer vision, image processing and bioinformatics (e.g., in matching backbone sequences of proteins).

The \emph{discrete Fr\'echet distance} between a sequence of points $P$ and another sequence of points $Q$ is defined as the minimum, over all possible independent (forward) traversals of the sequences, of the maximum distance between the current point of $P$ and the current point of $Q$ during the traversals. See below and in Section~\ref{sec:preliminaries} for a more formal definition.
%The two sequences in Figure~\ref{fig:hausdorff}(b) have small Hausdorff distance, because
%for every point in one sequence there is a nearby point in the other sequence. However
%if the mapping between the two sequences is required to be monotone, in the above sense, many points of
%one sequence will have to be mapped to far away points in the other sequence, and thus the
%discrete Fr\'echet distance between the two sequences is large.

In this work, we focus on the problem of computing the minimum discrete Fr\'echet distance \emph{under translation}. That is, given two sequences $P$ and $Q$ of $m$ and $n$ points, respectively, in the plane, we wish to translate $Q$ by a vector $t\in \mathbb{R}^2$ such that the discrete Fr\'echet distance between $P$ and $Q+t$ is minimized.

\medskip
\noindent\textbf{Background.}
The Fr\'echet distance has been extensively studied during the past 20 years. The main variant,
the continuous Fr\'echet distance, where no transformation is allowed, measures similarity between (polygonal) curves. It is the smallest $\delta$ for which there exist forward simultaneous traversals of the two curves, from start to end, so that at all times the distance between the corresponding points on the curves is at most $\delta$. The discrete Fr\'echet distance considers sequences $P$ and $Q$ of points instead of curves.
It is defined analogously, where (a) the simultaneous traversals of the sequences are represented as a sequence of pairs $(p^{(1)}, q^{(1)}),\ldots,(p^{(t)},q^{(t)})$, where $p^{(i)}\in P$, $q^{(i)}\in Q$, for $i=1,\ldots,t$, (b) the first (resp., last) pair consists of the starting (resp., terminal) points of the two sequences, and (c) each $(p^{(i)}, q^{(i)})$ is obtained from $(p^{(i-1)}, q^{(i-1)})$ by moving one (or both) point(s) to the next position in the corresponding sequence.
Most studies of the problem consider the situation where no translation (or any other transformation) is allowed. In this ``stationary'' case, the discrete Fr\'echet distance in
the plane can be computed, using dynamic programming, in
$O(mn)$ time (Eiter and
Mannila~\cite{EM94}). Agarwal et al.~\cite{ABKS12} slightly improve this bound, and show
that the (stationary) discrete Fr\'echet distance can be computed in
$O\left(\dfrac{mn\log\log n}{\log n}\right)$ time on a word RAM, and a very recent result of Bringmann~\cite{Bri14} indicates that a substantially subquadratic solution (one that runs in time $O((mn)^{1-\delta})$, for some $\delta>0$) is unlikely to exist.
Alt and Godau~\cite{AG95} showed that the (stationary) continuous Fr\'echet distance of two
planar polygonal curves with $m$ and $n$ edges, respectively, can be computed,
using dynamic programming, in $O(mn\log mn)$ time.  This has been slightly improved recently by Buchin et al.~\cite{BBMM12}, who
showed that the continuous
Fr\'echet distance can be computed in $O(N^2 (\log N)^{1/2} (\log\log N)^{3/2})$
time on a pointer machine, and in $O(N^2 (\log\log N)^2)$ time
on a word RAM (here $N=m=n$ denotes the number of edges in each curve).
In short, the best known algorithms for the stationary case, for both discrete and continuous variants, hover around the quadratic time bound.

Not surprisingly, the problems become much harder, and their solutions much less efficient, when translations (or other transformations) are allowed.
For the problem of computing the minimum continuous Fr\'echet distance under translation, Alt et al.~\cite{AKW01} give an algorithm with
$O(m^3n^3(m+n)^2\log(m+n))$ running time, where $m$ and $n$ are the number of edges in the curves. They also give a $(1+\eps)$-approximation algorithm for the problem, that runs in $O(\eps^{-2}mn)$ time. That is, they compute a translation of one of the curves relative to the other, such that the Fr\'echet distance between the resulting curves is at most $(1+\eps)$ times the minimum Fr\'echet distance under any translation.
In three dimensions, Wenk~\cite{Wenk02} showed that, given two polygonal chains with $m$ and $n$ edges respectively, the
minimum continuous Fr\'echet distance between them, under any reasonable family of transformations, can be computed in $O((m+n)^{3f+2} \log (m+n))$ time,
where $f$ is the number of degrees of freedom for moving one chain relative to the other. So with translations alone $(f=3)$,
the minimum continuous Fr\'echet distance in $\mathbb{R}^3$ can be computed in $O((m+n)^{11} \log (m+n))$ time, and when both
translations and rotations are allowed $(f=6)$, the corresponding minimum continuous Fr\'echet distance can
be computed in $O((m+n)^{20} \log (m+n))$ time.

The situation with the discrete Fr\'echet distance under translation is somewhat better, albeit still inefficient.
Jiang et al.~\cite{JXZ08} show
that, given two sequences of points in the plane, the minimum discrete Fr\'echet distance between
them under translation can be computed
in $O(m^3n^3 \log(m + n))$
time. For the case where both rotations and translations are allowed, they give an algorithm that runs in $O(m^4n^4 \log(m + n))$ time.
They also design a heuristic method for aligning two sequences of points under translation and rotation in three dimensions.
Mosig et al.~\cite{MC05} present an approximation algorithm
that computes the
%\rinat{They don't say clearly what is the approximation factor, I'm not completely sure what to write here, and I don't think there is enough time to understand their result better.}
 discrete Fr\'echet distance under translation, rotation
and scaling in the plane, up to a factor close to $2$, and runs in $O(m^2n^2)$ time.

\medskip
\noindent\textbf{Our results.} Our algorithm improves the bound of Jiang et al.~\cite{JXZ08} by a nearly linear factor, with running time $O(m^3n^2(1+\log(n/m))\log(m+n))$, assuming $m\leq n$. It uses a $0/1$-matrix $M(P,Q)$ of size $m\times n$, whose rows (resp., columns) correspond to the points of $P$  (resp., of $Q$). Assuming a stationary situation, or, rather, a fixed translation of $Q$, an entry in the matrix is equal to 1 if and only if the distance between the two corresponding points is at most $\delta$, where $\delta$ is some fixed distance threshold. We use $(i,j)$ to denote an entry in the matrix that corresponds to the points $p_i$ and $q_j$, and we use $M_{i,j}$ to denote its value. The discrete Fr\'echet distance is at most $\delta$ if and only if there is a row- and column-monotone path of ones in $M$ that starts at $(1,1)$ and ends at $(m,n)$ (see Section~\ref{sec:preliminaries} for a more precise definition).

We can partition the plane of translations into a subdivision $\A_\delta$ with $O(m^2n^2)$ regions, so that, for all translations in the same region, the matrix $M$ is fixed (for the fixed $\delta$). We then traverse the regions of $\A_\delta$, moving at each step from one region to a neighboring one. Assuming general position, in each step of our traversal exactly one entry of $M$ changes from $1$ to $0$ or vice versa.
We present a dynamic data structure $\Gamma(M)$ that supports an update of an entry of $M$, in $O(m(1+\log(n/m)))$ time, assuming $m\leq n$,\footnote{This is without loss of generality as we can change the roles of $m$ and $n$ by flipping $M$.} and then re-determines whether there is a monotone path of ones from $(1,1)$ to $(m,n)$, in $O(1)$ additional time.
If we find such a monotone path in $M$, we have found a translation $t$ (actually a whole region of translations\footnote{For a critical value of $\delta$, the region can degenerate to a single vertex of $\A_\delta$; see Sections~\ref{sec:arrangement} and~\ref{sec:optimization} for details.}) such that the discrete Fr\'echet distance between $P$ and $Q+t$ is at most $\delta$. Otherwise, when we traverse the entire $\A_\delta$ and fail after each update, we conclude that no such translation exists. Using this procedure, combined with the parametric searching technique~\cite{NM83}, we obtain an algorithm for computing the minimum discrete Fr\'echet distance under translation.

We reduce the dynamic maintenance of $M$  to dynamic maintenance of reachability in a planar graph, as edges are inserted and deleted to/from the graph.
Specifically, we can think of (the 1-entries of) $M$ as a representation of a planar directed graph with $N\leq mn$ nodes. Each 1-entry of $M$ corresponds to a node in the graph, and each possible forward move in a joint traversal is represented by an edge (see Section~\ref{sec:preliminaries} for details). Then, determining whether there is a row- and column-monotone path of ones from $(1,1)$ to $(m,n)$ corresponds to a reachability query in the graph (from $(1,1)$ to $(m,n)$).

A data structure for dynamic maintenance of reachability in directed planar graphs was given by Subramanian~\cite{Sub93}. This data structure supports updates and reachability queries in $O(N^{2/3}\log N)$ time, where $N$ is the number of nodes in the graph.
Diks and Sankowski~\cite{DS07} improved this data structure, and gave a structure that supports updates and reachability queries in $O(N^{1/2}\log^3 N)$ time.

We give a simpler and more efficient structure for maintaining reachability in $M$ that exploits its special structure.
 Our structure can update reachability information in $M$ in $O(m(1+\log(n/m)))$ time, assuming $m\leq n$,
and answers reachability query (from $(1,1)$ to $(m,n)$) in $O(1)$ time. In contrast, the data structure of
  \cite{DS07} applied in our context performs an update and a query in $O((mn)^{1/2}\log^3 (mn))$ time.
Using our structure, we obtain an algorithm for computing the minimum discrete Fr\'echet distance under translation that runs in $O(m^3n^2(1 +\log(n/m))\log(m+n))$ time (again, assuming $m\leq n$).

To summarize the contributions of this paper
are  twofold: (a) The reduction of the problem of computing the minimum discrete Fr\'echet distance to a dynamic planar directed graph reachability problem. (b) An efficient data structure for this reachability problem. For $m\approx n$ our structure is faster than the general reachability structure of \cite{DS07} by a polylogarithmic factor, and when $m\ll n$ the improvement is considerably more significant (roughly by a factor $\sqrt{n/m}$). Moreover, our data structure is simpler than that of Diks and Sankowski.

%\rinat{in addition to what I suggested to add (the examination of the set of translation such that each time there is a single change in $M$), (a) is also a contribution of the paper that does not relate to the comparison to~\cite{DS07} (?), but I suggest to write every contribution that we have, if not here, then somewhere else. I also think that the usage of the monge property can be useful in other frechet applications. }

\section{Preliminaries}
\label{sec:preliminaries}
We now define the (stationary) discrete Fr\'echet distance  formally. Let $P=(p_1,\ldots,p_{m})$ and
$Q=(q_1,\ldots,q_{n})$ be the two planar sequences defined in the introduction.

For some fixed distance $\delta>0$ we
 define a $0/1$-matrix $M_{\delta}(P,Q)$ formally as follows.  The rows (resp., columns) of $M_\delta(P,Q)$ correspond to the points
of $P$ (resp., of $Q$) in their given order. An entry $(i,j)$
%\rinat{here should I use $M_{i,j}$ or $(i,j)$?}
of $M_\delta(P,Q)$ is  1 if the distance between $p_i$ and $q_j$ is at most $\delta$, and is 0 otherwise.
we denote $M_\delta(P,Q)$ by $M$ when $P$ and $Q$ and $\delta$ are clear from the context.

The directed graph $\graph_\delta(P,Q)$ associated with $P$, $Q$ and $\delta$ has a vertex for each
pair $(p_i,q_j)\in P\times Q$ and an edge for each pair of adjacent ones in $M_\delta(P,Q)$.
Specifically, we have an edge from $(p_i,q_j)$ to $(p_{i+1},q_j)$ if and only if both $(i,j)$ and $(i+1,j)$
are 1 in $M$, an edge from $(p_i,q_j)$ to $(p_{i},q_{j+1})$ if and only if both $(i,j)$ and $(i,j+1)$
are 1 in $M$, and an edge from $(p_i,q_j)$ to $(p_{i+1},q_{j+1})$ if and only if both $(i,j)$ and $(i+1,j+1)$
are 1 in $M$. we denote $G_\delta(P,Q)$ by $G$ when $P$ and $Q$ and $\delta$ are clear from the context.

The \emph{(stationary) discrete Fr\'echet distance}
between $P$ and $Q$, denoted by $\dfrechet(P,Q)$, is
the smallest $\delta>0$ for which
$(p_{m},q_{n})$ is reachable from $(p_1,q_1)$ in $\graph_\delta$. Informally,
think of $P$ and $Q$ as two sequences of stepping stones and of two
frogs, the $P$-frog and the $Q$-frog, where the $P$-frog has to
visit all the $P$-stones in order and the $Q$-frog has to visit all
the $Q$-stones in order. The frogs are connected by a
rope of length $\delta$, and are initially placed at $p_1$ and
$q_1$, respectively. At each move, either one of the frogs jumps
from its current stone to the next one and the other stays at its current stone,
or both of them jump  simultaneously
from their current stones to the next ones. Furthermore, such a jump is allowed only if
the distances between the two frogs before and after the
jump are both at most $\delta$.
Then $\dfrechet(P,Q)$ is the smallest $\delta>0$ for which there exists
a sequence of jumps that gets the frogs to $p_{m}$ and $q_{n}$,
respectively.

The problem of computing the minimum discrete Fr\'echet distance under translation, as reviewed in the introduction, is to find a translation $t$ such that $\delta^*(P,Q+t)$ is minimized.

We say that an entry $(i,j)$ of $M$ is \emph{reachable} from an entry $(k,l)$, with $k\leq i, l\leq j$, if $(p_i,q_j)$ is reachable from $(p_k,q_l)$ in $\graph$.
A path from $(p_k,q_l)$ to $(p_i,q_j)$ in $\graph$ corresponds to a (weakly) row-monotone and column-monotone sequence of
ones in  $M$ connecting  the one in entry $(k,l)$ to the one in entry $(i,j)$.
This is sequence consists of three kinds of moves: 1) {\em upward moves} between entries of the form $(r,s)$ to $(r+1,s)$
in which the $P$-frog moves
from $p_{r}$ to $p_{r+1}$, 2) {\em right moves} between entries of the form $(r,s)$ to $(r,s+1)$
in which the $P$-frog moves
from $q_{s}$ to $q_{s+1}$, and 3) {\em diagonal moves} between entries of the form $(r,s)$ to $(r+1,s+1)$
in which the $P$-frog moves
from $q_{s}$ to $q_{s+1}$ both frogs move simultaneously --- the $P$-frog from $p_{r}$ to $p_{r+1}$, and the $Q$-frog from
$q_{s}$ to $q_{s+1}$.
 See
Figure~\ref{fig:staircase}.
We call such a monotone sequence of ones in $M$ a {\em path in M} from $(k,l)$ to $(i,j)$.
To determine whether
$\dfrechet(P,Q)\le\delta$, we need to determine whether there is
such a  path in $M$ that starts at
$(1,1)$ and ends at $(m,n)$.
We say that an entry $(i,j)$ of $M$ is \emph{reachable} if there is a path from $(1,1)$ to $(i,j)$.

We denote the concatenation of two  paths $\pi_1,\pi_2$ by $\pi_1\cdot\pi_2$, assuming that the last entry of $\pi_1$ is the first entry of $\pi_2$; this entry appears only once in the concatenation.

\begin{figure}[htb]

\centering\begin{tabular}{cc}
\includegraphics[scale=0.8]{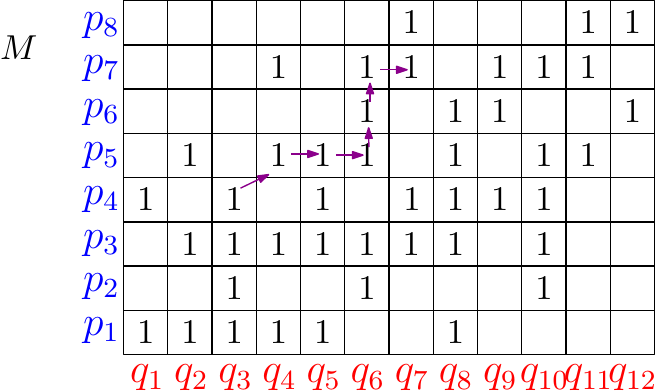} & \hspace{0.5cm}
\includegraphics[scale=0.8]{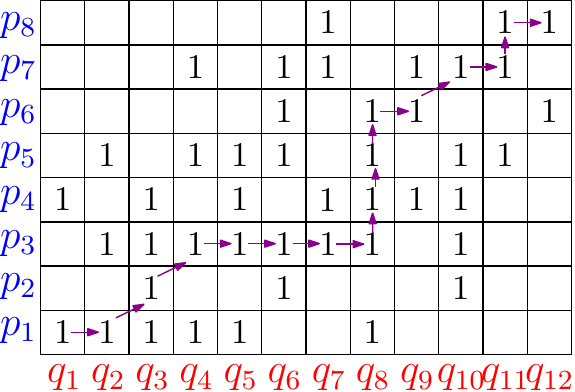} \\
(a) & (b)
\end{tabular}

\centering \caption{\small (a) A reachability path from $(4,3)$ to $(7,7)$. (b) A reachability path from $(1,1)$ to $(8,12)$.}
\label{fig:staircase}
\end{figure}

%In accordance with the discussion in the introduction, we note that $\graph_\delta$ is a planar directed graph, with a natural plane embedding into a portion of a grid that corresponds to the structure of $M$.

We use  a decomposition of $M$ into \emph{blocks}. A block is a submatrix of $M$ that corresponds to contiguous subsequences $P'$ and $Q'$ of $P$ and $Q$, respectively.
We denote by $M(P', Q')$ the block of $M(P,Q)$ formed by $P'$ and $Q'$. Consider a block $M(P', Q')$ of $M(P,Q)$. Let $p^-$ (resp., $p^+$) denote the first (resp., last) point of $P'$ and let $q^-$ (resp., $q^+$) denote the first (resp., last) point of $Q'$. We call the entries of $M(P', Q')$ corresponding to $\{p^-\} \times Q'\cup P'  \times \{q^-\}$ the \emph{input boundary} of $M(P', Q')$, and denote it by $M(P', Q')^-$ (the common entry corresponding to $\{p^-\}\times\{q^-\}$ appears only once in the boundary). Similarly, we call the entries of $M(P', Q')$ corresponding to $\{p^+\} \times Q' \cup P' \times \{q^+\}$ the \emph{output boundary} of $M(P', Q')$, and denote it by $M(P', Q')^+$ (with a similar suppression of the duplication of the common element $\{p^+\}\times\{q^+\}$). Note that there is a two-entry overlap between the input and output boundaries. We enumerate the entries of $M(P', Q')^-$ by first enumerating the entries of $\{p^-\}\times Q'$ from right to left (i.e., backwards) and then the remaining entries of $P' \times \{q^-\}$ from bottom to top (forward). We enumerate the entries of $M(P', Q')^+$ by first enumerating the entries of $P' \times \{q^+\}$ from bottom to top (forward) and then the remaining entries of $\{p\}^+ \times Q'$ from right to left (backwards). Informally, $M(P',Q')^-$ is enumerated in ``clockwise'' order, while $M(P',Q')^+$ is enumerated in ``counterclockwise'' order; see Figure~\ref{fig:block}. For two entries $i,j$ of this enumeration of an input or output boundary $B$ of $M(P', Q')$, we use $[i,j]$ to denote the sequence of entries $(i,i+1,\ldots,j)$ of $B$.

\begin{figure}[htb]

%\centering\begin{tabular}{cc}

%

%\end{tabular}

\centering\includegraphics[scale=0.8]{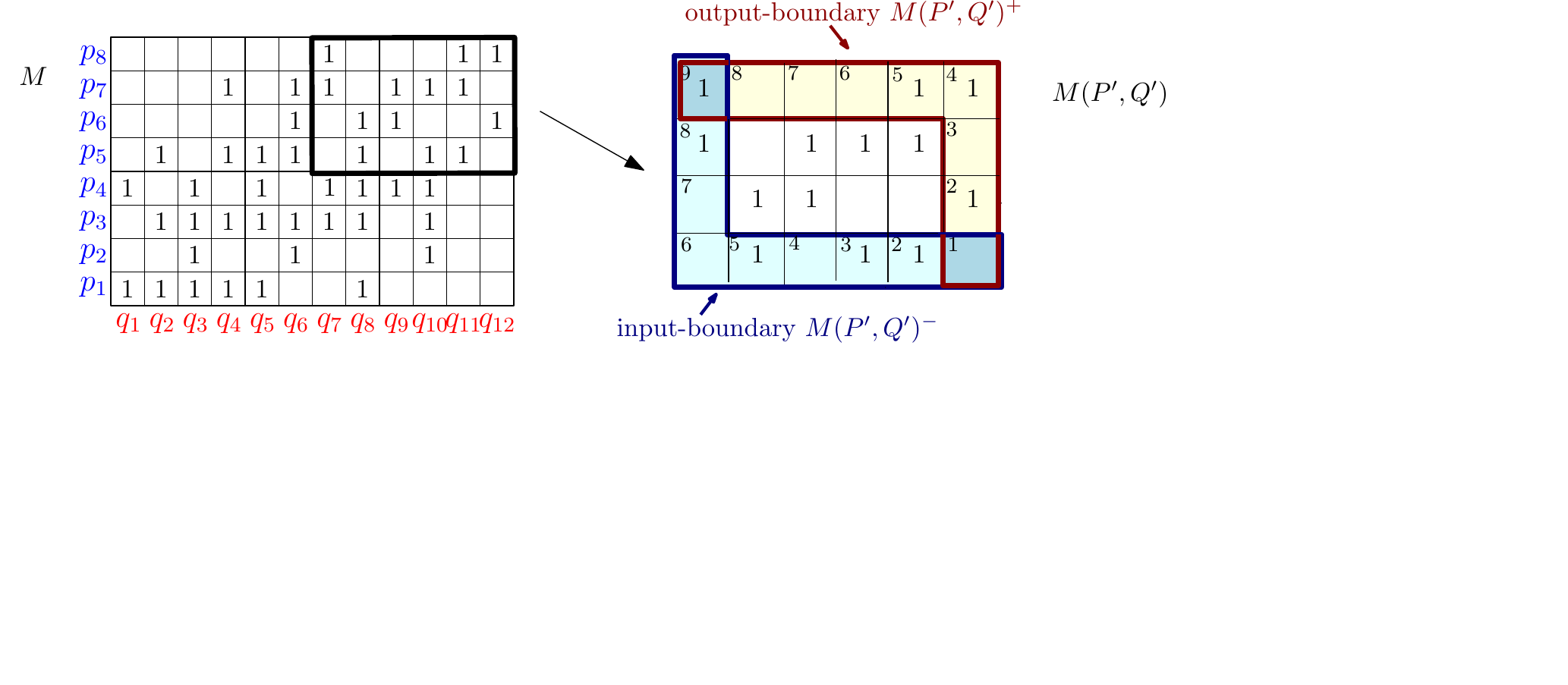}
\vspace{-4.1cm}

\hspace{-1.5cm} (a) \hspace{5.2cm} (b)%\vspace{-2cm}

\centering\caption{\small (a) A (highlighted) block $M(P',Q')$ of $M(P,Q)$ , where $P'=(p_5,p_6, p_7, p_8)$ and $Q' = (q_7,q_8,\ldots,q_{12})$. (b) The input boundary $M(P',Q')^-$ and the output boundary $M(P',Q')^+$ of $M(P',Q')$ are marked, with the orderings of their elements.}
\label{fig:block}
\end{figure}

We also use the following definitions. We call the entries corresponding to $P'  \times \{q^-\}$ the \emph{vertical input boundary} of $M(P',Q')$, and denote it by $\overline{M(P',Q')}\@^-$. We call the entries corresponding to $P'  \times \{q^+\}$ the \emph{vertical output boundary} of $M(P',Q')$, and denote it by $\overline{M(P',Q')}\@^+$. That is, $\overline{M(P',Q')}\@^-$ and $\overline{M(P',Q')}\@^+$ are the vertical parts of $M(P',Q')^-$ and $M(P',Q')^+$, respectively. We enumerate the entries of each vertical boundary from bottom to top.

\section{The subdivision of the plane of translations}
\label{sec:arrangement}
We first consider the corresponding decision problem.
That is, given a value $\delta>0$, we wish to decide whether there exists a translation $t\in \mathbb{R}^2$ such that $\dfrechet(P,Q+t)\le\delta$.

For a point $x\in \mathbb{R}^2$, let $D_\delta(x)$ be the disk of radius $\delta$ centered at $x$.
Given two points $p_i\in P$ and $q_j\in Q$, consider the disk $D_\delta(p_i-q_j)$, and notice that $t\in D_\delta(p_i-q_j)$ if and only if $\|(p_i-q_j)-t\| \leq \delta$ (or $\|p_i-(q_j+t)\| \leq \delta$). That is, $D_\delta(p_i-q_j)$ is precisely the set of translations $t$ for which $q_j+t$ is at distance at most $\delta$ from $p_i$.

We construct the arrangement $\A_\delta=\A_\delta(P,Q)$ of the disks in $\D=\{D_\delta(p_i-q_j)\mid (p_i,q_j)\in P\times Q\}$. We assume general position of the points. That is, we assume that (a) no more than two boundaries of these disks intersect in a common vertex of $\A_\delta$, and (b) no pair of the disks are tangent to each other. Nevertheless, such a degeneracy can arise when $\delta$ is a \emph{critical value} (see Section~\ref{sec:optimization} for details about critical values of $\delta$ that arise during the optimization procedure), but we assume that at most one such degeneracy can happen for a given $\delta$. Since the number of disks is $mn$, the combinatorial complexity of $\A_\delta$ is $O(m^2n^2)$. Let $f$ be a face of $\A_\delta$ of any dimension $0,1$ or $2$ (by convention, $f$ is assumed to be relatively open), and let $t\in f$ be a translation. Then, for points $p_i\in P, q_j \in Q$, $q_j+t$ is at distance at most $\delta$ from $p_i$ if and only if the disk $D_\delta(p_i-q_j)$ contains $f$ (otherwise, the disk is disjoint from $f$). Since this holds for every $t\in f$, it follows that
$f$ corresponds to a unique pairwise-distances matrix $M(P,Q+t)$, for any $t \in f$. We denote this matrix by $M(P, Q+f)$, for short.

The setup just described leads to the following naive solution for the decision problem. Construct the arrangement $\A_\delta$ for the given distance $\delta$, and traverse its faces. For each face $f\in \A_\delta$, form the corresponding pairwise-distances matrix $M(P,Q+f)$, and solve the (stationary) discrete Fr\'echet distance decision problem for $P$ and $Q+f$ using a straightforward dynamic programming on $M(P,Q+f)$ (or the more sophisticated slightly subquadratic algorithm of Agarwal et al.~\cite{ABKS12}). If $\delta^*(P, Q+f) \leq \delta$ for some face $f$, we conclude that there exists a translation $t$ such that $\delta^*(P, Q+t) \leq \delta$ (any translation $t\in f$ would do). If the entire arrangement $\A_\delta$ is traversed and no face $f$ of $\A_\delta$ satisfies $\delta^*(P, Q+f) \leq \delta$, we determine that $\delta^*(P, Q+t) > \delta$ for all translations $t\in \mathbb{R}^2$. The complexity of $\A_\delta$ is $O(m^2n^2)$, and solving the discrete Fr\'echet distance decision problem for each face of $\A_\delta$ takes $O(mn)$ time (or slightly less, as in~\cite{ABKS12}). Hence, the solution just described for the decision problem takes (slightly less than) $O(m^3n^3)$ time.

Jiang et al.~\cite{JXZ08} used an equivalent solution for the decision problem, that takes the same asymptotic running time. Rephrasing their procedure in terms of $\A_\delta$, they test whether $\delta^*(P,Q+t)\leq \delta$
for translations $t$  corresponding to
 the vertices of $\A_\delta$, and over an additional set of $mn$ translations, one chosen from the boundary of each disk. The correctness of this approach follows by observing that if $f$ is a face of $\A_\delta$ and $t$ is any point on $\bd{f}$, then all the $1$-entries of $M(P,Q+f)$ are also $1$-entries of $M(P,Q+t)$, so it suffices to test the vertices of $f$, or, if $f$ has no vertices, to test an arbitrary point $t\in \bd{f}$. We will use this observation in our implementation of the optimization procedure.
%\rinat{To save space, I changed it so that only the optimization procedure uses this observation, and the decision procedure traverses the faces of $\A_\delta$.}

Our naive solution is similar to the algorithm of Jiang et al.~\cite{JXZ08}, in the sense that they both discretize the set of possible translations. However, our solution is more suitable for the improvement of this naive bound, that we present in Section~\ref{sec:dynamic}, since it allows us to traverse the set of possible translations in a manner that introduces only a single change in $M(P, Q+f)$, when we move from one face $f$ of translations to a neighboring one.

To exploit this property we need a data structure that maintains reachability data for $M$, and updates it efficiently after each change. We present this structure in two stages. First, in Section~\ref{sec:linear_reachability}, we present a compact reachability structure for blocks of $M(P,Q+f)$, which is the main building block of the overall structure. Then, in Section~\ref{sec:dynamic}, we present the overall data structure, and show how to use it to improve the naive solution sketched above by a nearly linear factor.

\section{Compact representation of reachability in a block}
\label{sec:linear_reachability}

Let $B$ be a block of $M=M(P,Q+f)$ of size $r\times c$, and suppose that we have already computed the reachable entries of $B^-$ and we then wish to compute the reachable entries of $B^+$. If the entries of the block are given explicitly, this can be done in $O(r c)$ time using dynamic programming (or slightly faster using the algorithm of~\cite{ABKS12}). Our goal in this section is to design a data structure, that we denote as $\Phi(B)$, that allows us to compute the reachable entries of $B^+$ from the reachable entries of $B^-$, in $O(r+c)$ time. The overall data structure itself is constructed recursively from these block structures (see Section~\ref{sec:dynamic} for details), and implicitly accesses all the entries of $B$.
The advantage of using this block decomposition is that updating the structure can be done more efficiently.

\vspace{-1cm}
\begin{figure}[htb]

\centering\begin{tabular}{cc}
\includegraphics[scale=0.6]{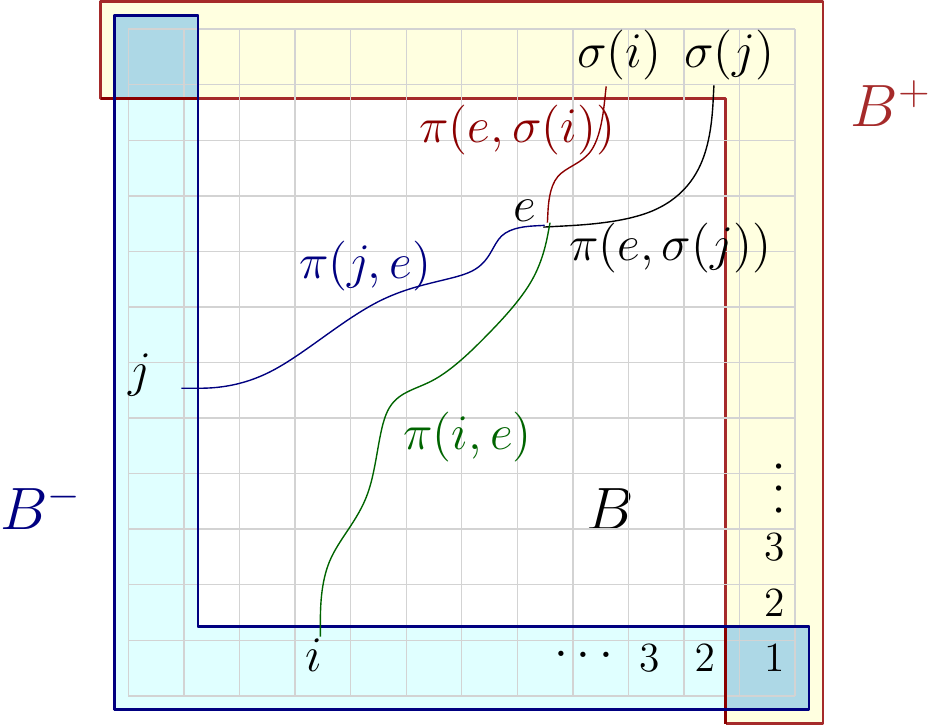} & %\hspace{0.5cm}
\includegraphics[scale=0.6]{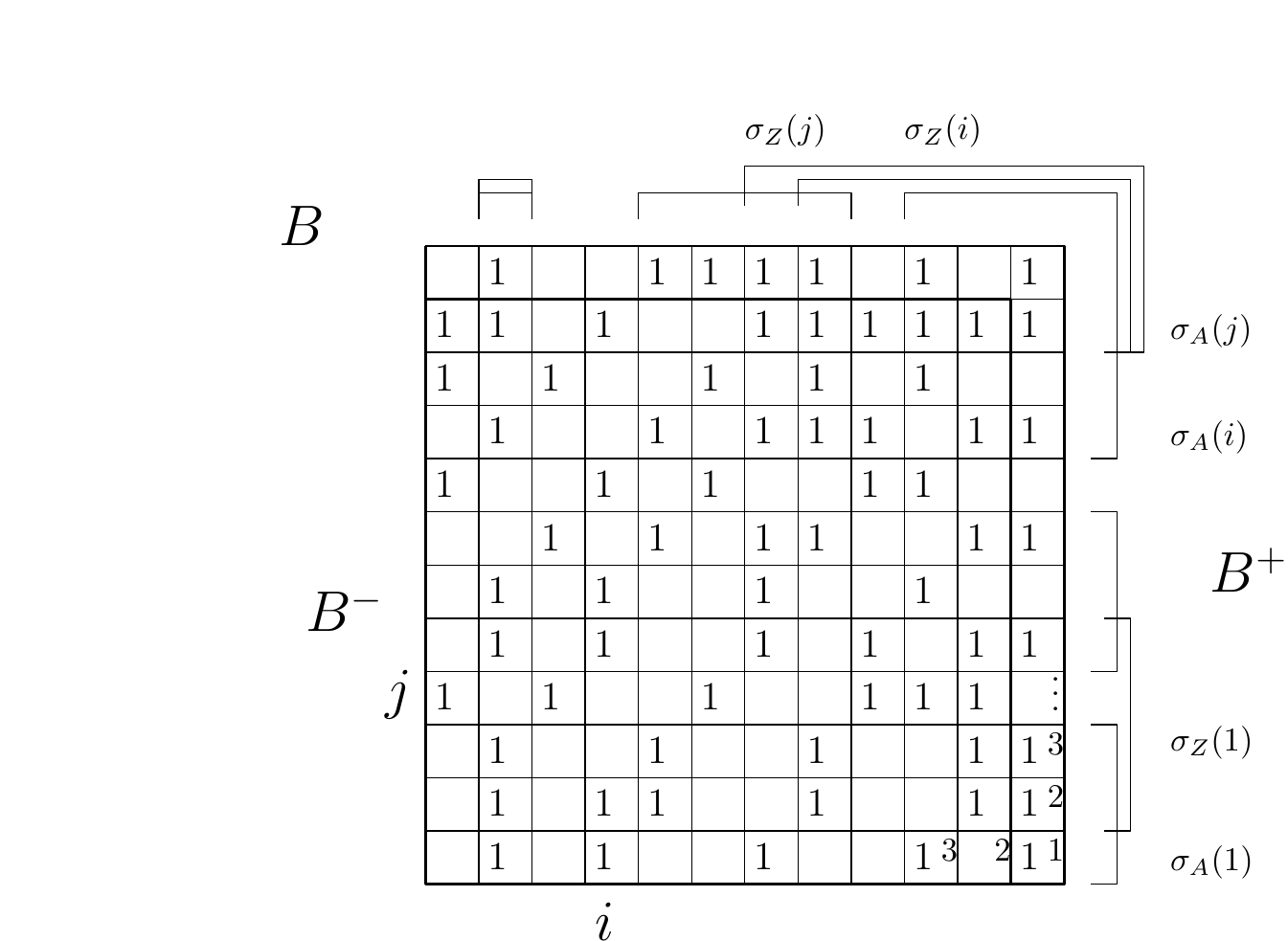} \\
(a) & (b)
\end{tabular}

\centering \caption{\small (a) Two entries $i,j$ of $B^-$ and two entries $\sigma(i)$ and $\sigma(j)$ of $B^+$ that are reachable from $i$ and $j$, respectively. Since $i<j$ and $\sigma(i)>\sigma(j)$, $\sigma(j)$ is reachable also from $i$, and $\sigma(i)$ is reachable also from $j$. (b) The intervals $[\sigma_A(k),\sigma_Z(k)]$, for any $1$-entry $k$ of $B^-$, are either disjoint or overlap in a common subinterval. Neither of these intervals can strictly contain both endpoints of the other. These intervals are defined (and shown in the figure) only for $1$-entries of $B^-$.}
\label{fig:united_fig}
\end{figure}

%\vspace{1cm}

\begin{observation}\label{obs:monge}
Let $B$ be a block of $M$ and let $i,j$ be two entries of $B^-$ such that $j>i$ (in the ``clockwise'' order defined in Section~\ref{sec:preliminaries}).
Let $\sigma(i)$ be an entry of $B^+$ that is reachable from $i$, and let $\sigma(j)$ be an entry of $B^+$ that is reachable from $j$. If $\sigma(j) <\sigma(i)$ (in the corresponding ``counterclockwise'' order) then $\sigma(j)$ is also reachable from $i$, and $\sigma(i)$ is also reachable from $j$.
\end{observation}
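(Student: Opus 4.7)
The plan is to prove the observation by a topological crossing argument. I would show that the two monotone paths in $B$ witnessing the reachability of $\sigma(i)$ from $i$ and of $\sigma(j)$ from $j$ must share a common entry of $M$, and then splice them at that entry.

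Let $\pi_1$ be a monotone path in $B$ from $i$ to $\sigma(i)$ and let $\pi_2$ be a monotone path in $B$ from $j$ to $\sigma(j)$. First I would analyze the cyclic order of the four endpoints around $\bd B$. Unpacking the definitions in Section~\ref{sec:preliminaries}, both $B^-$ and $B^+$ are enumerated from the bottom-right corner of $B$ to the top-left corner, with $B^-$ running clockwise (along the bottom edge then up the left edge) and $B^+$ running counterclockwise (up the right edge then along the top edge). Hence a clockwise traversal of $\bd B$ starting from the bottom-right corner visits $B^-$ in its given order and then $B^+$ in reverse. The hypotheses $i<j$ and $\sigma(j)<\sigma(i)$ therefore place the four endpoints in the cyclic clockwise order $i,\ j,\ \sigma(i),\ \sigma(j)$, so that the endpoints of $\pi_1$ alternate with those of $\pi_2$ along $\bd B$.

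Next, by the Jordan curve theorem, the simple polyline $\pi_1$ splits $B$ into two open regions whose boundaries use the two complementary arcs of $\bd B\setminus\{i,\sigma(i)\}$: one arc contains $j$, the other contains $\sigma(j)$. Since $\pi_2$ starts at $j$ and ends at $\sigma(j)$, it must intersect $\pi_1$. The main technical step will be showing that this intersection actually contains a common \emph{grid entry}, not merely a point in the interior of some edge. Each edge of either path is an axis-aligned unit step or a diagonal up-right step between integer grid points, and a short case analysis shows that no two such edges can cross in their open interiors: a vertical and a horizontal edge can meet only where both coordinates are integers, i.e., at an endpoint of each; two diagonals are parallel and can overlap only by sharing an endpoint; and a diagonal meets a vertical edge at column $s$ only when the diagonal's parameter $t=s-s_d\in\{0,1\}$, i.e., at one of its endpoints (and symmetrically for horizontal edges). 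Hence $\pi_1\cap\pi_2$ contains at least one common $1$-entry $v$ of $M$.

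Finally, splicing at $v$ yields the two required paths: concatenating the prefix of $\pi_1$ from $i$ to $v$ with the suffix of $\pi_2$ from $v$ to $\sigma(j)$ is a monotone sequence of $1$-entries in $B$ witnessing the reachability of $\sigma(j)$ from $i$, and swapping the roles of the prefixes and suffixes gives the reachability of $\sigma(i)$ from $j$. The only mildly delicate point I anticipate is the degenerate case where $i$ or $\sigma(i)$ coincides with one of the two shared corners (bottom-right or top-left) of $B^-$ and $B^+$; in those situations one of the two arcs of $\bd B$ may degenerate to a point, but the cyclic-order argument still places $j$ and $\sigma(j)$ on opposite sides of $\pi_1$, so the crossing conclusion and splicing step remain valid.
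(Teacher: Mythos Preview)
Your approach is essentially the same as the paper's: both proofs take monotone paths $\pi_1$ from $i$ to $\sigma(i)$ and $\pi_2$ from $j$ to $\sigma(j)$, argue that they must share a common $1$-entry $e$, and then splice the prefixes and suffixes at $e$ to obtain the desired reachability paths. The paper's proof is terser---it simply asserts that ``$\pi(i,\sigma(i))$ must cross $\pi(j,\sigma(j))$ (i.e., there exists a $1$-entry $e\in\pi(i,\sigma(i))\cap\pi(j,\sigma(j))$)'' without further justification---whereas you supply the Jordan-curve cyclic-order argument and the edge-by-edge case analysis showing that any intersection of two unit grid steps occurs at a grid point; this extra rigor is sound and fills in exactly the step the paper leaves implicit.
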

\begin{proof}
See Figure~\ref{fig:united_fig}(a). Since $\sigma(i)$ is reachable from $i$, there is a (monotone)  path $\pi(i, \sigma(i))$ from $i$ to $\sigma(i)$ in $M$. Similarly, since $\sigma(j)$ is reachable from $j$, there is a (monotone) path $\pi(j, \sigma(j))$ from $j$ to $\sigma(j)$. Since $i<j$ and $\sigma(j) <\sigma(i)$, $\pi(i, \sigma(i))$ must cross $\pi(j, \sigma(j))$ (i.e., there exists a 1-entry $e \in \pi(i, \sigma(i)) \cap \pi(j, \sigma(j))$. Hence, $\pi(i, \sigma(i))$ can be decomposed into two subpaths $\pi(i,e), \pi(e, \sigma(i))$ such that $\pi(i, \sigma(i))= \pi(i,e)\cdot \pi(e, \sigma(i))$, and $\pi(j,\sigma(j))$ can be similarly decomposed as $\pi(j, \sigma(j))= \pi(j,e)\cdot \pi(e, \sigma(j))$.
As a result, the paths $\pi(i, \sigma(j)) = \pi(i,e) \cdot \pi(e, \sigma(j))$ and $\pi(j, \sigma(i)) = \pi(j,e) \cdot \pi(e, \sigma(i))$ are also (monotone)  paths, and the claim follows.
\end{proof}

\begin{corollary}\label{cor:reachable_interval}
Let $B$ be a block of $M$, let $i$ be an entry of $B^-$ and let $\sigma_1(i),\sigma_2(i)$ be two entries in $B^+$ that are both reachable from $i$, with $\sigma_1(i)<\sigma_2(i)$. If there exists an entry $\sigma(j)$ that is reachable from some $j\in B^-$, such that $\sigma_1(i)<\sigma(j)<\sigma_2(i)$, then $\sigma(j)$ is also reachable from $i$.
\end{corollary}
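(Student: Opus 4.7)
The plan is to reduce the corollary to a single application of Observation~\ref{obs:monge}, via a short case analysis on the position of $j$ relative to $i$ in the clockwise order on $B^-$. The intuition is that $\sigma(j)$ is sandwiched between $\sigma_1(i)$ and $\sigma_2(i)$ on $B^+$, so there must exist a pair $\bigl(i,\sigma_k(i)\bigr)$ (with $k\in\{1,2\}$) that, together with the pair $\bigl(j,\sigma(j)\bigr)$, forms a configuration to which the ``crossing'' observation applies.

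First I would handle the trivial case $j=i$, where $\sigma(j)$ is reachable from $i$ by hypothesis. Then, suppose $j>i$ in the clockwise order on $B^-$. The hypothesis $\sigma(j)<\sigma_2(i)$ gives exactly the inequality needed by Observation~\ref{obs:monge} applied to the two entries $i<j$ of $B^-$ with associated reachable entries $\sigma_2(i)$ and $\sigma(j)$ in $B^+$. The observation then yields that $\sigma(j)$ is reachable from $i$, as required. Symmetrically, if $j<i$, I would apply Observation~\ref{obs:monge} with the roles swapped: the two entries of $B^-$ are $j<i$, with associated reachable entries $\sigma(j)$ and $\sigma_1(i)$, and the hypothesis $\sigma_1(i)<\sigma(j)$ supplies the ``order-reversal'' condition. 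The observation then gives that $\sigma_1(i)$ is reachable from $j$, but what we want is the other half of its conclusion, namely that $\sigma(j)$ is reachable from $i$. This is exactly what the observation also delivers.

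The only step that requires genuine care is bookkeeping of the two orderings, the clockwise order on $B^-$ and the counterclockwise order on $B^+$, and making sure that in both subcases one is invoking the observation with the correct strict inequalities (the observation is stated with strict ``$<$'', and the hypotheses of the corollary indeed give strict ``$<$'' between $\sigma_1(i)$, $\sigma(j)$, and $\sigma_2(i)$). I do not expect any real obstacle beyond this notational check; the content of the corollary is just the statement that, once two reachable entries $\sigma_1(i),\sigma_2(i)$ of $B^+$ from a single source $i$ are known, the observation forces every reachable entry of $B^+$ strictly between them to also be reachable from $i$, which is a direct manifestation of the Monge-like crossing property established in Observation~\ref{obs:monge}.
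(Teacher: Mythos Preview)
Your proposal is correct and follows essentially the same argument as the paper's proof: a case split on whether $j>i$ or $j<i$, each case reduced to a single invocation of Observation~\ref{obs:monge} using the appropriate inequality $\sigma(j)<\sigma_2(i)$ or $\sigma_1(i)<\sigma(j)$. The only difference is that you explicitly mention the trivial case $j=i$, which the paper omits.
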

\begin{proof}
By Observation~\ref{obs:monge}, if $i<j$, then $\sigma(j)$ is reachable from $i$ since $\sigma(j)<\sigma_2(i)$. If $i>j$, then $\sigma(j)$ is reachable from $i$ since $\sigma_1(i)<\sigma(j)$.
\end{proof}

The corollary is applied as follows. Let $i$ be an entry of $B^-$, let $\sigma_A(i)$ and $\sigma_Z(i)$ denote the first and last entries in $B^+$ that are reachable from $i$. (Note that
for these entries to be defined, the value of the entry $i$ must be $1$. Symmetrically, the values of both $\sigma_A(i)$ and $\sigma_Z(i)$, if defined, must be equal to $1$.) Then the interval $[\sigma_A(i), \sigma_Z(i)]$ can only contain entries of the following three types.
\begin{enumerate}
\item
1-entries that are reachable from $i$.
\item
0-entries.
\item
1-entries that are not reachable from $i$, nor from any other entry of $B^-$.
\end{enumerate}

In other words, $[\sigma_A(i), \sigma_Z(i)]$ cannot contain $1$-entries that are reachable from some $j$ in $B^-$ and not from $i$.

% We also obtain the following corollary.

\begin{corollary}\label{cor:overlapping_intervals}
Let $B$ be a block of $M$ and let $i$ and $j$ be two entries of $B^-$ such that $j>i$. Then $\sigma_A(j)\geq \sigma_A(i)$ and $\sigma_Z(j)\geq \sigma_Z(i)$.
\end{corollary}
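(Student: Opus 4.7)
The plan is to derive both inequalities directly from Observation~\ref{obs:monge} by contradiction, since that observation already provides the ``non-crossing'' structure of reachability from input-boundary entries to output-boundary entries. Note that we may assume $\sigma_A(i),\sigma_Z(i),\sigma_A(j),\sigma_Z(j)$ are all defined, as otherwise the inequalities are vacuous (by the convention that an undefined first/last reachable entry imposes no constraint, or handled as a trivial separate case).

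For the first inequality, I would assume for contradiction that $\sigma_A(j)<\sigma_A(i)$. Then we have $i<j$ in the clockwise order of $B^-$, $\sigma_A(i)$ reachable from $i$, $\sigma_A(j)$ reachable from $j$, and $\sigma_A(j)<\sigma_A(i)$ in the counterclockwise order of $B^+$. This is exactly the configuration of Observation~\ref{obs:monge} (with $\sigma(i):=\sigma_A(i)$ and $\sigma(j):=\sigma_A(j)$), which yields that $\sigma_A(j)$ is also reachable from $i$. But $\sigma_A(j)<\sigma_A(i)$ contradicts the definition of $\sigma_A(i)$ as the \emph{first} entry of $B^+$ reachable from $i$.

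The argument for $\sigma_Z(j)\geq \sigma_Z(i)$ is symmetric. Assume $\sigma_Z(j)<\sigma_Z(i)$; then $i<j$, $\sigma_Z(i)$ is reachable from $i$, $\sigma_Z(j)$ is reachable from $j$, and $\sigma_Z(j)<\sigma_Z(i)$. Again by Observation~\ref{obs:monge}, $\sigma_Z(i)$ is reachable from $j$, contradicting the definition of $\sigma_Z(j)$ as the \emph{last} entry of $B^+$ reachable from $j$.

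I do not expect any real obstacle here; the corollary is essentially a direct rephrasing of Observation~\ref{obs:monge} applied to the extremal reachable entries. The only care needed is to check the convention when $i$ or $j$ is a $0$-entry, in which case the corresponding $\sigma_A$/$\sigma_Z$ are undefined and the statement is understood to hold vacuously (or, equivalently, the corollary can be stated only for $1$-entries of $B^-$, which is the setting in which it will be used).
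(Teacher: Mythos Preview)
Your proof is correct and essentially identical to the paper's own argument: both derive each inequality by contradiction, applying Observation~\ref{obs:monge} to conclude that $\sigma_A(j)$ would be reachable from $i$ (contradicting minimality of $\sigma_A(i)$) and that $\sigma_Z(i)$ would be reachable from $j$ (contradicting maximality of $\sigma_Z(j)$). The only difference is cosmetic---you add a remark about the case where the entries are $0$-entries and the quantities are undefined, which the paper leaves implicit.
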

\begin{proof}
Assume to the contrary that $\sigma_A(i)>\sigma_A(j)$. Then, according to Observation~\ref{obs:monge}, $\sigma_A(j)$ is reachable from $i$. Hence, $\sigma_A(i) \leq \sigma_A(j)$, a contradiction. Similarly, if $\sigma_Z(j)<\sigma_Z(i)$, then  Observation~\ref{obs:monge} implies that $\sigma_Z(i)$ is reachable from $j$. Hence, $\sigma_Z(j) \geq \sigma_Z(i)$, contradiction.
\end{proof}
In other words, $[\sigma_A(i),\sigma_Z(i)]$ and $[\sigma_A(j),\sigma_Z(j)]$ can be either disjoint or overlap in a common subinterval, but they cannot be properly nested inside one another (that is, neither of these intervals can contain both endpoints of the other in its ``interior''). Note, however, that one interval can weakly contain the other. That is, if one interval contains the other, then either $\sigma_A(i)=\sigma_A(j)$ or $\sigma_Z(i)=\sigma_Z(j)$, or both. See Figure~\ref{fig:united_fig}(b).

Let $B$ be a block of $M$ of size $r\times c$.
We construct a data structure $\Phi(B)$ for $B$, which stores the following information. (Here we only specify the structure; its construction is detailed in Section~\ref{sec:dynamic}.)
\begin{enumerate}
\item\label{enum:B^-}
For each $1$-entry $i$ of $B^-$ we store
\begin{enumerate}
\item
the first entry $\sigma_A(i)$ of $B^+$ that is reachable from $i$, and
\item
the last entry $\sigma_Z(i)$ of $B^+$ that is reachable from $i$.
\end{enumerate}
\item \label{enum:B^+}
For each $1$-entry $j$ of $B^+$ we store
\begin{enumerate}
\item
a flag $f(j)$ indicating whether $j$ is reachable from some entry of $B^-$.
\item
a list $L_A(j)$ of the $1$-entries $i\in B^-$ such that $\sigma_A(i)= j$, and
\item
a list $L_Z(j)$ of the $1$-entries $i\in B^-$ such that $\sigma_Z(i)= j$.
\end{enumerate}
\end{enumerate}
\begin{lemma}\label{lem:linear}
Given the data structure $\Phi(B)$ for a block $B$, and given the entries of $B^-$ that are reachable from $(1,1)$, we can determine, in $O(r+c)$ time, the entries of $B^+$ that are reachable from $(1,1)$.
\end{lemma}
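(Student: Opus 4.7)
The plan is to reduce the problem to a one-dimensional interval-cover sweep over $B^+$. The first observation is that, because paths in $M$ are (weakly) row- and column-monotone, any path from $(1,1)$ to an entry $j\in B^+$ must enter the block $B$ through some $1$-entry of the input boundary $B^-$. Therefore $j$ is reachable from $(1,1)$ if and only if there exists $i$ in the given set $R\subseteq B^-$ of reachable input-boundary entries from which $j$ is reachable inside $B$, and the task reduces to computing $S := \{\, j\in B^+ : j \text{ is reachable in } B \text{ from some } i\in R\,\}$.

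Next, I would use the reachable-interval structure to characterize $S$ in terms of the data stored in $\Phi(B)$. By Corollary~\ref{cor:reachable_interval}, for any $1$-entry $i\in B^-$ the entries of $B^+$ reachable from $i$ are precisely the $1$-entries of $B^+$ that lie in $[\sigma_A(i),\sigma_Z(i)]$ and are reachable from some entry of $B^-$, i.e.\ those $j$ with $f(j)=1$. Consequently,
\[ S \;=\; \{\, j\in B^+ : f(j)=1 \text{ and } \exists\, i\in R \text{ with } \sigma_A(i)\le j\le \sigma_Z(i)\,\}. \]
So it suffices, for each $j\in B^+$ with $f(j)=1$, to decide whether $j$ is covered by at least one interval $[\sigma_A(i),\sigma_Z(i)]$ with $i\in R$.

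To do this in $O(r+c)$ time, I would first mark $R$ in an $O(r+c)$-size boolean array indexed by the entries of $B^-$. Then sweep the entries of $B^+$ in the counterclockwise enumeration, maintaining a counter equal to the number of intervals $[\sigma_A(i),\sigma_Z(i)]$ with $i\in R$ that are currently open. At each entry $j$ the sweep (a) scans $L_A(j)$ and, for every $i\in L_A(j)\cap R$, increments the counter; (b) if $f(j)=1$ and the counter is positive, records $j$ as reachable; (c) scans $L_Z(j)$ and, for every $i\in L_Z(j)\cap R$, decrements the counter. Correctness is immediate from the characterization above. For the running time, $|B^+|=O(r+c)$, and each $1$-entry $i\in B^-$ appears in exactly one $L_A(\cdot)$ list and exactly one $L_Z(\cdot)$ list, so the total scanning work is $O(|B^-|)=O(r+c)$.

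The only real conceptual step is the first paragraph's reduction, namely that any path from $(1,1)$ into $B$ must pass through $B^-$; once this is established, the bulk of the argument is a routine interval-cover sweep, with Corollaries~\ref{cor:reachable_interval} and~\ref{cor:overlapping_intervals} (the latter ensuring a coherent interval structure in which the counter is well-behaved) having already done the structural heavy lifting.
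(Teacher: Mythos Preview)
Your proof is correct, but it proceeds differently from the paper's. The paper iterates over the reachable entries $i\in R\subseteq B^-$ in order and, for each, scans the portion $I(i)=[\max\{\sigma_Z(i^-),\sigma_A(i)\},\sigma_Z(i)]$ of $B^+$, where $i^-$ is the previous element of $R$; Corollary~\ref{cor:overlapping_intervals} guarantees these scanned portions are essentially disjoint, so the total work is $O(r+c)$. You instead sweep over $B^+$ and use the inverse lists $L_A,L_Z$ to maintain a count of how many intervals $[\sigma_A(i),\sigma_Z(i)]$ with $i\in R$ currently cover the sweep position---a standard interval-stabbing argument. Two remarks on the comparison: first, your sweep does not actually need Corollary~\ref{cor:overlapping_intervals} at all (the counter is well-behaved for any family of intervals, nested or not), so your parenthetical invoking it is superfluous; second, the paper's proof of this lemma uses only $\sigma_A,\sigma_Z$ and the flags $f$, reserving the lists $L_A,L_Z$ for the merge in Lemma~\ref{lem:union}, whereas your argument exploits those lists already here. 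Either way the bound is the same, and your route is arguably more self-contained since it does not depend on the monotonicity of the interval endpoints.
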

\begin{proof}
We go over the reachable $1$-entries of $B^-$ in order. For each such entry $i$, we go over the entries in the interval $I(i)=[\max\{\sigma_Z(i^-),\sigma_A(i)\}, \sigma_Z(i)]$ of $B^+$, where $i^-$ is the previous reachable $1$-entry of $B^-$ (for the first reachable entry $i$ of $B^-$, $I(i)=[\sigma_A(i),\sigma_Z(i)]$ and $i^-$ is undefined).
Note that, by Corollary~\ref{cor:overlapping_intervals}, $\max\{\sigma_Z(i^-),\sigma_A(i)\}\in [\sigma_A(i), \sigma_Z(i)]$ so
$I(i)\subseteq[\sigma_A(i), \sigma_Z(i)]$.  (The entries of $[\sigma_A(i), \sigma_Z(i)]$ that precede $\max\{\sigma_Z(i^-),\sigma_A(i)\}$ were already processed when we went over $I(i^-)$ or over intervals associated with earlier indices.)

For each $1$-entry $j$ of $I(i)$ that is reachable from some entry of $B^-$ (according to the flag $f(j)$), we determine that $j$ is reachable also from $(1,1)$. Since we traverse each interval $[\sigma_A(i), \sigma_Z(i)]$ starting from $\max\{\sigma_Z(i^-),\sigma_A(i)\}$, the internal portions of the subintervals that we inspect are pairwise disjoint, implying that the running time is linear in $r+c$. We omit the straightforward proof of correctness of this procedure.
\end{proof}

\section{Dynamic maintenance of reachability in $M(P,Q+f)$}
\label{sec:dynamic}
%rinat{Maybe this paragraph can be improved}
We present a data structure, that uses the compact representation of reachability in a block of the previous section, to support an update of a single entry of $M$ in $O(m(1+\log(n/m)))$ time, assuming $m\leq n$.
We present this data structure in two stages. First, in Section~\ref{sec:data}, we show how to support an update of a single entry in $O(m)$ time, in the case where $M$ is a square matrix of size $m\times m$. Then, in Section~\ref{sec:improved}, we generalize this data structure to support an update of a single entry in $O(m(1+\log(n/m)))$ time, in the general case where $M$ is an $m\times n$ matrix with $m\leq n$ (the case $m\geq n$ is treated in a fully symmetric manner).

In Section~\ref{sec:overall}, we  describe the overall decision procedure that improves the naive solution sketched in Section~\ref{sec:arrangement}, using this dynamic data structure.

\subsection{A dynamic data structure for reachability maintenance in a square matrix}
\label{sec:data}

%\vspace{-4cm}
\begin{figure}[htb]
\vspace{-4cm}
\centering\begin{tabular}{cc}

\hspace{-1.2cm}\includegraphics[scale=0.6]{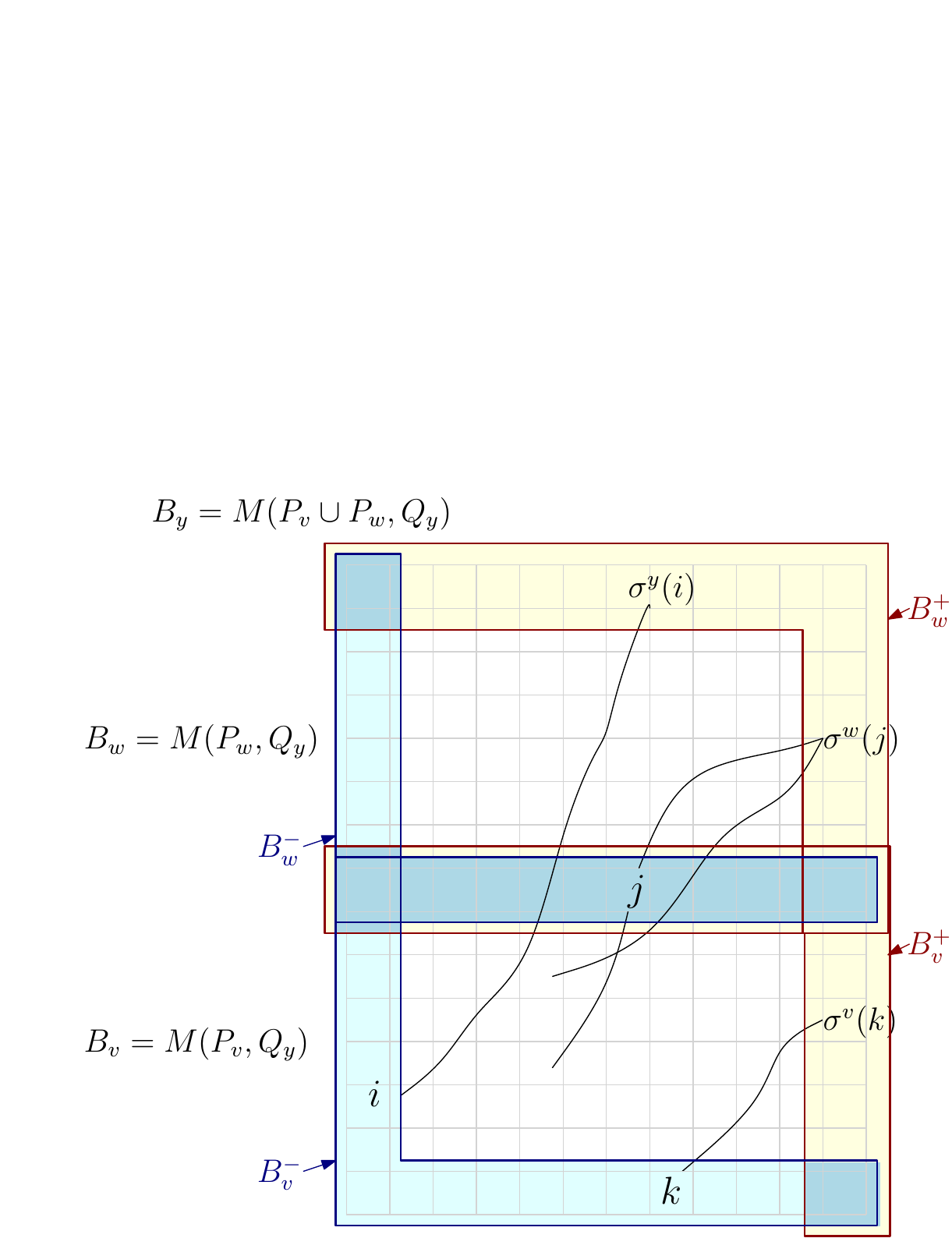} & %\hspace{0.5cm}
\includegraphics[scale=0.6]{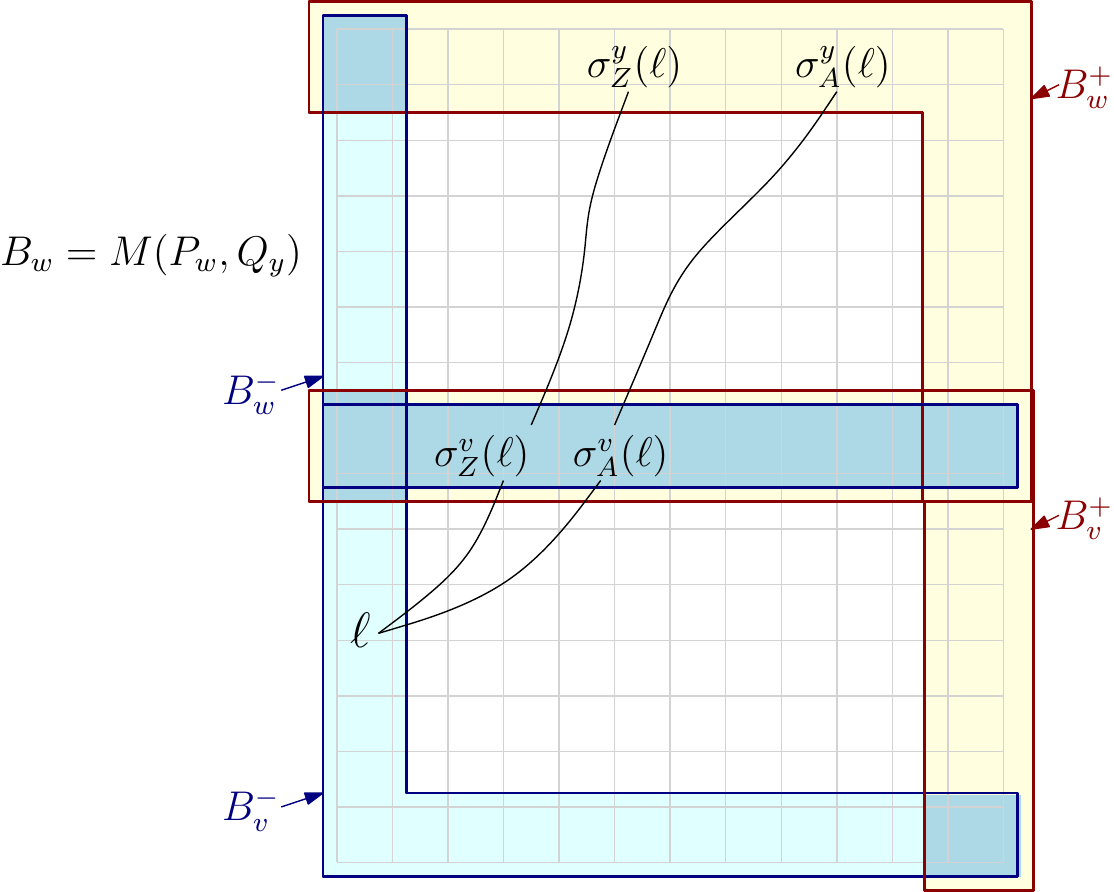} \\
(a)  & \hspace{2cm}(b)
\end{tabular}
%\end{figure}
%\begin{figure}
\centering \caption{\small A block $B_y=M(P_y,Q_y)$, corresponding to a node $y$ of $\Gamma$, is composed of the blocks of the children $v,w$ of $y$. The block $B_v=M(P_v,Q_v)$ corresponding to the left child $v$ lies below the block $B_w=M(P_w,Q_w)$ that corresponds to the right child $w$, and we have $P_y=P_v\cup P_w$ and $Q_y=Q_v=Q_w$.
\\(a)
$\sigma^{y}(i)$ and $\sigma^{y}(k)$ are examples of reachable entries of $B_y^+$ and we have $\sigma^{y}(i)=\sigma^{w}(\sigma^{v}(i))$ and $\sigma^{y}(k)=\sigma^{v}(k)$. $\sigma^{w}(j)$ is an entry of $B_w^+$ that is reachable from $B_w^-$, but it is not a reachable entry of $B_y^+$ (from $B_y^-$) since all the paths in $B_y$ that lead to $\sigma^{w}(j)$ go through entries of $B_v^+$ that are not reachable from $B_v^-$.
\\(b)
$\sigma_A^{y}(\ell)=\sigma_A^{w}(\sigma_A^{v}(\ell))$ and $\sigma_Z^{y}(\ell)=\sigma_Z^{w}(\sigma_Z^{v}(\ell))$.
}
\label{fig:united_blocks}
\end{figure}

We store the reachability data of $M(P,Q+f)$ (of some arbitrary face $f$ from which we start the traversal of the arrangement $\A_\delta$) in a so-called \emph{decomposition tree} $\Gamma$, by halving $P$ and $Q$ alternately.
That is, the root $v$ of $\Gamma$ corresponds to the entire matrix $M(P,Q+f)$ and we store at $v$ the reachability information $\Phi(M(P,Q+f))$, as described in the previous section. (The actual construction of the reachability data, at all nodes of $\Gamma$, is done bottom-up, as described below.) In the next level of $\Gamma$ we partition $P$ into two subsequences $P_1,P_2$, of at most $\lfloor m/2\rfloor+1$ points each, such that the last point of $P_1$ is the first point of $P_2$, and obtain a corresponding ``horizontal'' partition of $M(P,Q+f)$ into two blocks $M(P_1,Q+f)$, $M(P_2,Q+f)$, each of size at most $(\lfloor m/2\rfloor+1) \times m$, with a common ``horizontal'' boundary. We create two children $v_1, v_2$ of $v$ and store at each $v_i$ the reachability information $\Phi(M(P_i,Q+f))$, for $i=1,2$. In the next level of $\Gamma$, we partition $Q$ into two subsequences $Q_1,Q_2$, of at most $\lfloor m/2\rfloor+1$ points each, such that the last point of $Q_1$ is the first point of $Q_2$, and obtain a corresponding ``vertical'' partition of each block $M(P_i,Q+f), i\in\{1,2\}$, into two blocks $M(P_i,Q_j+f), j\in\{1,2\}$, each of size at most $(\lfloor m/2\rfloor+1) \times (\lfloor m/2\rfloor+1)$, with a common vertical boundary. We construct four respective grandchildren, and store the corresponding reachability structures $\Phi(M(P_i,Q_j+f))$ at these nodes.
We continue recursively to partition each block by halving it horizontally or vertically, alternately, in the same manner, until we reach blocks of size $2\times 2$.
For each node $v$ of $\Gamma$, let $P_v$ and $Q_v$ denote the subsequences of $P$ and $Q$ that form the block $M(P_v, Q_v+f)$ that is associated with $v$. To simplify the notation, we denote $\Phi(M(P_v,Q_v+f))$ as $\Phi_v$, for each node $v$.

The reachability data $\Phi_v$ at the nodes $v$ of $\Gamma$ is computed by a bottom-up traversal of $\Gamma$, starting from the leaves. The construction of $\Phi(M(P_v,Q_v+f))$ at a leaf $v$ is trivial, and takes constant time. The following lemma provides an efficient procedure for constructing the reachability data at inner nodes of $\Gamma$.

\begin{lemma}\label{lem:union}
Let $y$ be an inner node of $\Gamma$ with left and right children $v$ and $w$, where the blocks stored at $v,w$ have a common horizontal boundary.
Given the reachability data $\Phi_v, \Phi_w$, the data $\Phi_y$ can be computed in $O(|P_y|+|Q_y|)$ time. An analogous statement holds when the common boundary of the children blocks is vertical.
\end{lemma}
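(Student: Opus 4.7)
The plan is to exploit the decomposition: the horizontal boundary $H = \{p_v^+\}\times Q_v = \{p_w^-\}\times Q_w$ partitions $B_y$ into the lower block $B_v$ and the upper block $B_w$, and every monotone path in $B_y$ from $B_y^-$ to $B_y^+$ either stays in $B_v$ and exits through the portion $R_v := P_v \times \{q_y^+\}$ of $B_y^+$, stays entirely in $B_w$, or crosses $H$ exactly once and thereby splits into a path in $B_v$ ending at some $h \in H$ concatenated with a path in $B_w$ beginning at $h$. Correspondingly, $B_y^+$ is the concatenation of $R_v$ and $B_w^+$ (sharing one corner entry), while $B_y^-$ is the concatenation of $B_v^-$ and $B_w^- \setminus H$. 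The orderings of $H$ as the tail of the enumeration of $B_v^+$ and as the head of the enumeration of $B_w^-$ coincide, so an entry of $H$ can be read interchangeably as a source (in $B_w^-$) or as a destination (in $B_v^+$).

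To build $\sigma_A^y$ and $\sigma_Z^y$ I iterate through $\ell \in B_y^-$ in its standard order. For $\ell \in B_w^- \setminus H$ every path from $\ell$ must remain in $B_w$, so we simply reindex $\sigma_A^w(\ell), \sigma_Z^w(\ell)$ into $B_y^+$. For $\ell \in B_v^-$, read $[\sigma_A^v(\ell), \sigma_Z^v(\ell)]$ from $\Phi_v$. If this interval lies entirely in $R_v$, its endpoints already give $\sigma_A^y(\ell), \sigma_Z^y(\ell)$. Otherwise, let $h_{\min}(\ell)$ and $h_{\max}(\ell)$ be the first and last $1$-entries of $H$ lying in $[\sigma_A^v(\ell), \sigma_Z^v(\ell)]$ with $f^v = 1$; by Corollary~\ref{cor:reachable_interval} these are precisely the $1$-entries of $H$ reachable from $\ell$ through $B_v$. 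Propagation through $B_w$ then contributes the interval $[\sigma_A^w(h_{\min}(\ell)), \sigma_Z^w(h_{\max}(\ell))]$ in $B_w^+$, and taking the first and last reachable entries across the $R_v$-part and the $B_w^+$-part gives $\sigma_A^y(\ell)$ and $\sigma_Z^y(\ell)$ in $O(1)$ additional work per $\ell$. By Corollary~\ref{cor:overlapping_intervals}, $\sigma_A^v$ and $\sigma_Z^v$ are non-decreasing along $B_v^-$, so the intersection of $[\sigma_A^v(\ell),\sigma_Z^v(\ell)]$ with $H$ slides monotonically forward as $\ell$ advances; hence $h_{\min}(\ell), h_{\max}(\ell)$ can be tracked with two pointers that never move backward, for a total cost of $O(|B_v^-| + |H|) = O(|P_y|+|Q_y|)$.

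For $f^y, L_A^y, L_Z^y$, set $f^y(j) = f^v(j)$ for $j \in R_v$, since any path in $B_y$ ending at such a $j$ is forced to stay in $B_v$. For $j \in B_w^+$, $f^y(j) = 1$ iff some $\ell \in B_w^-$ that is itself reachable from $B_y^-$ satisfies $j \in [\sigma_A^w(\ell), \sigma_Z^w(\ell)]$. Call $\ell$ \emph{live} if either $\ell$ is a $1$-entry of $B_w^- \setminus H$ (automatically reachable, being in $B_y^-$), or $\ell \in H$ with $f^v(\ell) = 1$ (the only way an entry of $H$ can be reached from $B_y^-$, since entries of $B_w^-\setminus H$ sit strictly above $H$ in $B_w$ and cannot reach $H$ by a monotone path). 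I then sweep $j$ through $B_w^+$ in order, maintaining a counter of currently active live $\ell$'s: at each $j$, for every live $\ell \in L_A^w(j)$ increment the counter and for every live $\ell \in L_Z^w(j)$ schedule a decrement just past $j$. Then $f^y(j) = 1$ precisely when the counter is positive. Since $\sum_j (|L_A^w(j)| + |L_Z^w(j)|) = O(|B_w^-|)$, this sweep runs in $O(|P_y|+|Q_y|)$ time. Finally, $L_A^y$ and $L_Z^y$ are populated by a single pass over $\ell \in B_y^-$ using the already computed $\sigma_A^y(\ell), \sigma_Z^y(\ell)$.

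I expect the main obstacle to be largely bookkeeping rather than conceptual: keeping straight the local ``clockwise/counterclockwise'' orderings of $B_v^\pm$, $B_w^\pm$ and the global ordering of $B_y^\pm$, and correctly handling the shared corner $(p_v^+, q_y^+)$ that simultaneously closes $R_v$ and opens $B_w^+$ without double-counting. Once the indexings are fixed, the correctness of the two-pointer scan along $H$ is precisely the content of Corollaries~\ref{cor:reachable_interval} and~\ref{cor:overlapping_intervals}. The analogous claim for a common vertical boundary follows by exchanging the roles of rows and columns: $v$ now plays the role of the left block, $w$ of the right block, and the shared row $H$ is replaced by a shared column, with all arguments carrying over verbatim.
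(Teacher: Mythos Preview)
Your approach is correct and essentially matches the paper's: compose the reachability data through the shared row $H$ to obtain $\sigma_A^y,\sigma_Z^y$, and sweep $B_w^+$ with a counter/queue driven by $L_A^w,L_Z^w$ (restricting to ``live'' starts) to set the flags $f^y$. The one place you work harder than necessary is the two-pointer scan for $h_{\min}(\ell),h_{\max}(\ell)$: the paper observes that whenever $\sigma_A^v(\ell)$ (resp.\ $\sigma_Z^v(\ell)$) lands in $H$ it is itself a reachable $1$-entry, hence already equals $h_{\min}(\ell)$ (resp.\ $h_{\max}(\ell)$), so one can set $\sigma_A^y(\ell)=\sigma_A^w(\sigma_A^v(\ell))$ and $\sigma_Z^y(\ell)=\sigma_Z^w(\sigma_Z^v(\ell))$ in $O(1)$ per $\ell$ without any scan along $H$.
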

\begin{proof}
Note that in the setup of the lemma, we have $Q_y=Q_v=Q_w$ and $P_y=P_v\cup P_w$. By construction, $M(P_v,Q_y)$ lies below $M(P_w,Q_y)$. Denote $M(P_v,Q_y)$ by $B_v$, $M(P_w,Q_y)$ by $B_w$, and $M(P_y,Q_y)$ by $B_y$. For each entry $i$ of $B_y^-$, denote by $\sigma_A^y(i)$ (resp., $\sigma_Z^y(i)$) the first (resp., last) entry of $B_y^+$ that is reachable from $i$. We also use
$\sigma^y(i)$ to denote an entry of $B_y^+$ that is reachable from $i$ in $B_y$. Analogous notations are used for the children blocks $B_v, B_w$.
See Figure~\ref{fig:united_blocks}.

We first copy the reachability information from the boundaries of $B_v$ and $B_w$ to the boundary of $B_y$ (except for the ``interior'' portion $B^*_{vw}$ of the common boundary $B_v^+ \cap B_w^-$ of $B_v$ and $B_w$, which is not a boundary of $B_y$). The data for the $1$-entries on the left boundary of $B_w$ (which are of type \ref{enum:B^-} in the definition of $\Phi$) is still valid, since the reachability paths of $B_y$ that start at these entries are fully contained in $B_w$.
Similarly, the data for the $1$-entries on the right boundary of $B_v$ (which are of type \ref{enum:B^+}) is still valid, since the reachability paths of $B_y$ that end at these entries are fully contained in $B_v$.
We thus need to determine the reachability information from the $1$-entries of the input boundary $B_v^-$ of $B_v$ to the entries of the output boundary $B_w^+$ of $B_w$, and merge it with the already available data, to get the complete structure $\Phi$ at $y$.

First note that an entry $j$ of $B_w^+$ that is reachable from $B_w^-$ may now become unreachable from $B_y^-$. This happens if all the reachability paths in $B_y$ to $j$ go through entries on $B^*_{vw}$ that are not reachable from $B_v^-$. See Figure~\ref{fig:united_blocks}(a).
We thus need to turn the flag $f(j)$ of such entries to false. To do this, we go over the entries of $B_w^+$ in order, and maintain a queue $\Q$ that satisfies the invariant that, when we are at an entry $j$ of $B_w^+$, $\Q$ contains all the entries $i$ of $B_w^-$ that are reachable from $B_y^-$, such that $j$ is reachable from $i$. That is, $\Q$ contains all the entries $i \in B^*_{vw}$ that are reachable from $B_v^-$ such that $j$ is reachable from $i$, and all the entries $i\in B_w^-\setminus B_{vw}^*$ (that is, the left side of $B_w^-$) such that $j$ is reachable from $i$. We start with an empty queue. For each $1$-entry $j$ of $B_w^+$ we first go over the list $L_A(j)$ (of $\Phi_w$), and for each element $i$ in $L_A(j)$ that is in $B^*_{vw}$, we check if it is reachable from $B_v^-$ (using the flag $f(i)$ from $\Phi_v$). If it is, we put it in $\Q$. We also add to $\Q$ each element in $L_A(j)$ that is in $B_w^-\setminus B_{vw}^*$. If $\Q$ is empty, there is no reachability path from $B_y^-$ to $j$ and we set $f(j)$ to be false. We then go over the list $L_Z(j)$ (of $\Phi_w$) and remove from $\Q$ each element in $L_Z(j)$ that is in $\Q$. This traversal takes $O(|P_y|+|Q_y|)$ time, since each element of $B_w^-$ appears at most once in the lists $L_A$ and at most once in the lists $L_Z$.
The correctness follows from the invariant that when we go over an entry $j\in B_w^+$, all the entries of $B_w^-$ that $j$ is reachable from, and that are reachable from $B_y^-$, are in $\Q$. The invariant is maintained correctly because each time that an interval $[\sigma_A(i),\sigma_Z(i)]$ of an entry $i\in B_w^-$ begins (and $i$ is reachable from $B_y^-$), $i$ is inserted into $\Q$, and when the interval ends, $i$ is removed from $\Q$, so $i$ is in $\Q$ for all entries $j$ that are reachable from $i$. In conclusion, if $\Q$ is empty, $j$ is not reachable from $B_y^-$ and the flag $f(j)$ can be turned false. Otherwise, $i$ is reachable from $B_y^-$.

We now update the intervals $[\sigma_A(i),\sigma_Z(i)]$ of the entries $i\in B_v^-$ and, in correspondence, the lists $L_A(\sigma(i)), L_Z(\sigma(i))$ of $B_w^+$ (where $\sigma(i)$ is any entry of $B_w^+$ that is reachable from $i$).
Consider a $1$-entry $i$ of $B_v^-$ and consider an entry $\sigma^v(i)$ in $B^*_{vw}$; that is, $\sigma^v(i)$ is a 1-entry in $[\sigma_A^v(i),\sigma_Z^v(i)]$ that is reachable from $i$. By transitivity, the entries $\sigma^w(\sigma^v(i))$ of $B_w^+$ that are reachable from $\sigma^v(i)$ are also reachable from $i$. We update $[\sigma_A(i),\sigma_Z(i)]$ according to this rule, as follows (see Figure~\ref{fig:united_blocks}(b)). We set $\sigma_A^{y}(i)=\sigma_A^{w}(\sigma_A^{v}(i))$, for each entry $i\in B_v^-$ such that $\sigma_A^v(i)\in B^*_{vw}$; correspondingly, we also add $i$ to $L_A(\sigma_A^{y}(i))$.
Similarly, for each entry $i\in B_v^-$ such that $\sigma^v_Z(i)\in B^*_{vw}$, we set $\sigma_Z^{y}(i)=\sigma_Z^{w}(\sigma_Z^{v}(i))$ and we add $i$ to $L_Z(\sigma_Z^{y}(i))$. (Recall that if $\sigma_A^v(i)$ (or $\sigma_Z^v(i)$) is in $B_v^+\setminus B^*_{vw}$, this reachability information was already copied to $\Phi_y$ and that the reachability information for $B_w^-\setminus B_{vw}^*$ was also copied to $\Phi_y$.) Clearly, for each entry $i\in B_y^-$, no entry of $B_y^+\setminus [\sigma_A^v(i),\sigma_Z^v(i)]$ is reachable from $i$. This traversal takes $O(|P_y|+|Q_y|)$ time.

Finally, when we copied information from $B_w^+$ to $B_y^+$, we also copied the lists $L_A$ and $L_Z$ that may include entries of $B^*_{vw}$. Since $B^*_{vw}$ is not a part of the boundary of $B_y$, we need to remove this information from the lists $L_A$ and $L_Z$ of $B_y^+$. We thus go over the entries of $B^*_{vw}$. For each entry $e$ of $B^*_{vw}$, we remove $e$ from $L_A(\sigma^w_A(e))$ and from $L_Z(\sigma^w_Z(e))$. Clearly, this traversal takes $O(|Q_y|)$ time.
\end{proof}

We now show how to use Lemma~\ref{lem:union} to construct $\Gamma$ in $O(m^2)$ time and to update it, when a single entry changes, in $O(m)$ time. We also show how to determine, using $\Gamma$, whether $(m,m)$ is reachable from $(1,1)$ in constant time after the update.

\begin{lemma}
\label{lem:constructGamma}
(a) Given a square matrix $M$, the decomposition tree $\Gamma$ (including the reachability data at its nodes) can be constructed from scratch in $O(m^2)$ time. (b) If a single entry of $M$ is updated, then $\Gamma$ can be updated in $O(m)$ time. (c) Given $\Gamma$, we can determine whether $(m,m)$ is reachable from $(1,1)$ in constant time.
\end{lemma}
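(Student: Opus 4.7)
The plan is to handle the three parts in sequence, with Lemma~\ref{lem:union} serving as the workhorse throughout.

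For part~(a), I would construct $\Gamma$ bottom-up. Each leaf corresponds to a $2\times 2$ block, whose reachability structure $\Phi$ can be filled in by inspection in $O(1)$ time; since there are $O(m^2)$ leaves this step costs $O(m^2)$. At every internal node $y$, once the structures of its two children are in hand, I invoke Lemma~\ref{lem:union} to produce $\Phi_y$ in $O(|P_y|+|Q_y|)$ time. To bound the total cost I would sum this contribution level by level. At depth $\ell$ of $\Gamma$ (root at depth $0$) there are $2^\ell$ nodes, and because the splits alternate horizontally and vertically the blocks stay close to square, with dimensions $r_\ell\times c_\ell$ satisfying $r_\ell c_\ell = m^2/2^\ell$ and $r_\ell + c_\ell = O(m/2^{\lfloor \ell/2\rfloor})$. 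The work at depth $\ell$ is therefore $O(m\cdot 2^{\lceil\ell/2\rceil})$, and summing this geometric series over $\ell = 0,\ldots,\Theta(\log m)$ gives the claimed $O(m^2)$ total.

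For part~(b), the key observation is that the toggled entry $M_{i,j}$ lies in the block of exactly one node at each level, so the nodes whose $\Phi$ must be refreshed are precisely those on a single root-to-leaf path in $\Gamma$. I would first update the affected leaf in $O(1)$ time, then walk up the path. At each ancestor $y$ on the path, one of its children is the node we have just refreshed and the other is unchanged (its block does not contain $(i,j)$, so its $\Phi$ remains correct), so I can reapply Lemma~\ref{lem:union} to rebuild $\Phi_y$ in $O(|P_y|+|Q_y|)$ time. Summing these costs along the path yields another geometric series, dominated by the $O(m)$ contribution from the root, giving the stated $O(m)$ bound.

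For part~(c), at the root of $\Gamma$ the associated block is all of $M$, so $(1,1)$ is an entry of the input boundary $B^-$ and $(m,m)$ is an entry of the output boundary $B^+$. If $M_{1,1}=0$ or $M_{m,m}=0$ the query returns ``not reachable'' immediately. Otherwise $\sigma_A(1,1)$ and $\sigma_Z(1,1)$ are both defined at the root, and I would return ``reachable'' precisely when $(m,m)$ lies in the interval $[\sigma_A(1,1),\sigma_Z(1,1)]$ (in the counterclockwise enumeration of $B^+$) and the stored flag $f(m,m)$ is true. Necessity of both conditions is immediate from the definitions; sufficiency is an application of Corollary~\ref{cor:reachable_interval} with $i=(1,1)$, which upgrades ``$(m,m)$ is a $1$-entry in $[\sigma_A(1,1),\sigma_Z(1,1)]$ that is reachable from some entry of $B^-$'' to ``$(m,m)$ is reachable from $(1,1)$.'' The entire test takes $O(1)$ time.

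I expect the main obstacle to be the careful per-level bookkeeping in part~(a); a careless analysis is apt to pick up a spurious $\log m$ factor, and it is precisely the alternation of horizontal and vertical halving, which keeps the intermediate blocks within a factor of two of being square, that causes the per-level bound to form a geometric series collapsing to $O(m^2)$ rather than $O(m^2\log m)$. Once that accounting is in place, part~(b) is essentially a restriction of the same argument to the nodes on a single root-to-leaf path, and part~(c) simply reads off the desired bit from the root structure produced by parts~(a) and~(b).
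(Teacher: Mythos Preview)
Your proposal is correct and follows essentially the same approach as the paper: bottom-up construction via Lemma~\ref{lem:union}, a level-by-level geometric-series accounting for~(a), restriction to a root-to-leaf path for~(b), and reading off $\sigma_A,\sigma_Z,f$ at the root for~(c). One small inaccuracy in part~(b): because consecutive sibling blocks share a boundary row or column (the last point of $P_1$ is the first point of $P_2$, and likewise for $Q$), the updated entry can lie in \emph{two} sibling blocks at a given level, so there may be two root-to-leaf paths (merging at the node where $e$ sits in the interior of the common boundary of its children's blocks) rather than one; the paper notes this explicitly, and it only changes the constant, not the $O(m)$ bound.
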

\begin{proof}
(a) We construct $\Gamma$ in a bottom-up manner, as prescribed in Lemma~\ref{lem:union}. For the blocks at the leaves, the reachability data is computed in brute force, in $O(1)$ time per block, and at each inner node $y$, the data is computed from the data at its children in time $O(|P_y|+|Q_y|)$, using Lemma~\ref{lem:union}; we refer to $|P_y|+|Q_y|$ as the \emph{size} of the block $B_y$ at $y$. The sizes of the blocks at levels $2j-1$ and $2j$ is $O\left(\dfrac{m}{2^j}\right)$, and the number of these blocks is $O(2^{2j})$. The height of $\Gamma$ is $\lceil \log m \rceil$. The cost of the overall construction of $\Gamma$ is proportional to the sum of the sizes of its blocks (this also holds at the leaf level), which is thus
\begin{equation*}
O\left(\sum_{j=0}^{\lceil \log m\rceil} 2^{2j}\cdot\dfrac{m}{2^j}\right)=O\left(m\sum_{j=0}^{\lceil \log m\rceil} 2^{j}\right)=O\left(m^2\right).
\end{equation*}

\noindent(b) The main observation here is that to update $\Gamma$ when an entry $e$ of $M$ changes, it suffices to update the reachability data along the single path of $\Gamma$ of those nodes $y$ for which $e\in B_y$. (Actually, because of the overlap between block boundaries, there are two such paths that meet at the unique node $y$ for which $e$ belongs to the ``interior'' of the common boundary of the blocks of its children.) The reachability data of the nodes along this path is constructed again from scratch in a bottom-up manner, using Lemma~\ref{lem:union}. The cost of the updates of these blocks is proportional to the sum of their sizes, which is
\begin{equation*}
O\left(\sum_{j=0}^{\lceil \log m\rceil} \dfrac{m}{2^j}\right)=O(m).
\end{equation*}

\noindent(c) To determine whether $(m,m)$ is reachable from $(1,1)$, we simply check in the reachability data structure $\Phi(M)$ of the root of $\Gamma$ whether $(m,m)$ is a $1$-entry that belongs to $[\sigma_A((1,1)), \sigma_Z((1,1))]$ and the flag $f((m,m))$ is true.
\end{proof}

\subsection{A generalized structure for arbitrary matrices}
\label{sec:improved}
We next describe a modified variant of the structure for the case where
$m$ and $n$ are unequal. In what follows we assume, as above and without loss of generality,
that $m\le n$.

We first partition $M$ into $k=O( n/m)$ square blocks $B_1,B_2,\ldots,B_k$,
of size $m\times m$ each  such that consecutive blocks overlap in a single column.
(The last block may be of smaller width, but we
handle it in the same manner as the other blocks; it is easy to show that the bounds of Lemma~\ref{lem:constructGamma} still hold.)
 We build the decomposition tree
and the associated reachability data for each of these blocks, as in Section~\ref{sec:data}; denote the structure for block $B_i$ by $\bar{\Gamma}_i$, for $i=1,\ldots,k$.

We now combine the structures $\bar{\Gamma}_1,\ldots,\bar{\Gamma}_k$ into a single global
structure $\bar{\Gamma}$. For this, we construct a balanced binary tree $T$, with $k$ leaves
 $v_1,\ldots,v_k$, where $v_i$, for $i=1,\ldots,k$, corresponds to $B_i$ and stores $\bar{\Gamma}_i$. Each node $v$ of $T$
represents a block $B_v$ that is the concatenation of the blocks stored at the
leaves of the subtree rooted at $v$. Since each leaf block spans all the rows of $M$,
the common boundary of any pair of consecutive blocks consists only
of a full single column of $M$. The same holds at any node $y$ of $T$, with
left child $v$ and right-child $w$. That is, the common boundary $B_{vw} := B_v^+\cap B_w^-$
between $B_v$ and $B_w$ is vertical, and consists of a full single column of $M$.

We claim that we can merge the reachability structures $\Phi_v$ of $B_v$
and $\Phi_w$ of $B_w$ into the structure $\Phi_y$ of $B_y$
in $O(m)$ time, instead of $O(|B_y|)=O(m+|Q_y|)$ time (as was the cost in the preceding subsection),
which can be much larger. The main observation that facilitates this improvement
is that there is no need to maintain the reachability data $\Phi_v$ at the horizontal
portions of the boundary of any of the blocks $B_v$. This follows from the obvious
property that any path $\pi$ from the initial entry $(1,1)$ to any entry $(i,j)$
in any leaf block reaches $(i,j)$ by crossing all the vertical boundaries
$B_{12},B_{23},\ldots$ that delimit all the preceding leaf blocks, and the
portion $\pi_l$ of $\pi$ within each of the preceding blocks $B_l$ connects an entry
on the left vertical boundary of $B_l$ to an entry on its right vertical boundary.
Note that $\pi_l$ can ``crawl'' along the lower or upper boundary of $B_l$, but to
exit $B_l$ it has to cross the vertical boundary, possibly through its entries in
row $1$ or row $n$.

Figure~\ref{fig:reachable} is an illustration of an inner block $B_y$ of $\bar{\Gamma}$ that is composed of a left block $B_v$ and a right block $B_w$.

\begin{figure}[htb]
\centering\includegraphics[scale=0.6]{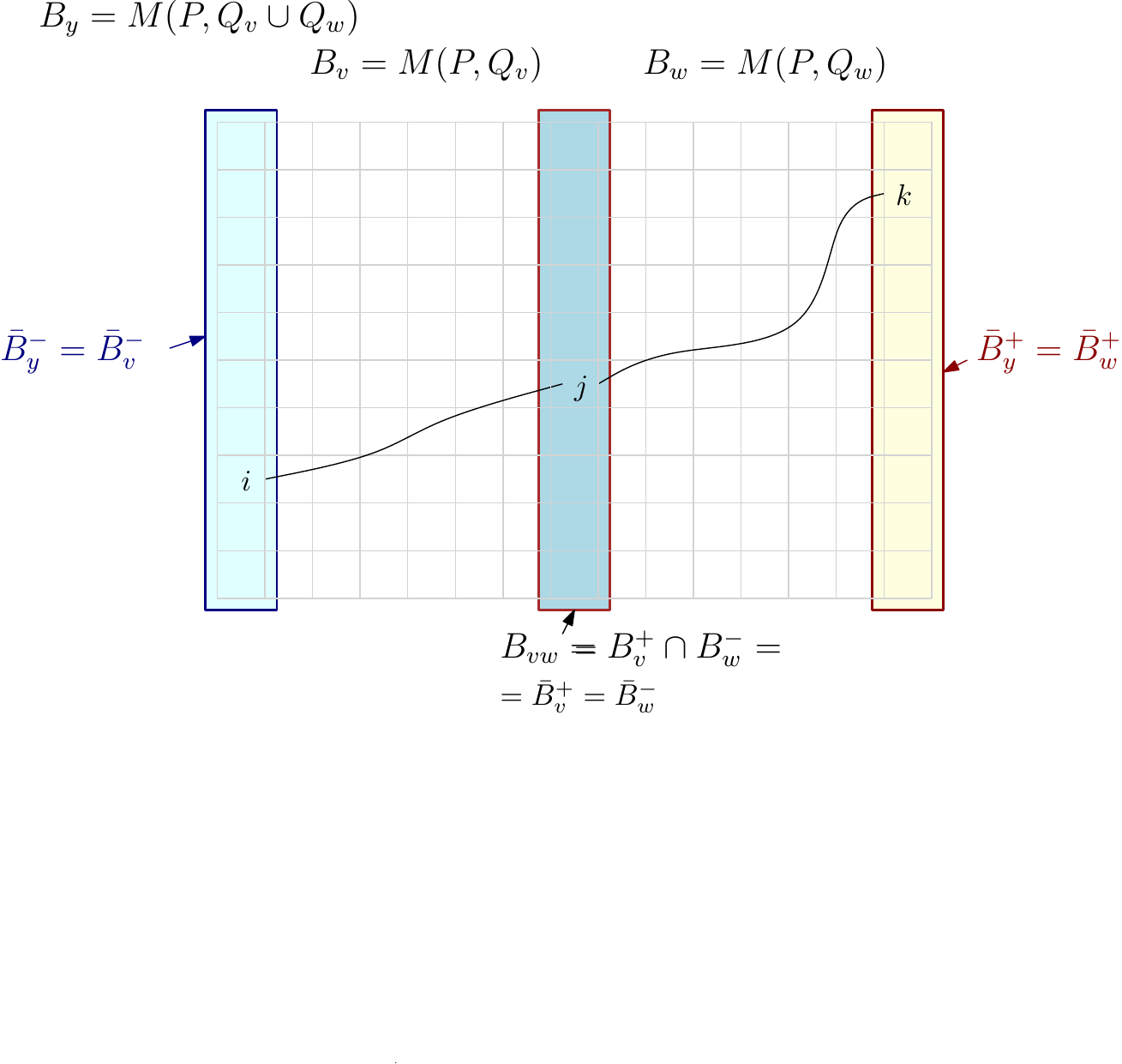} %\hspace{0.5cm}
\vspace{-2.5cm}
\centering \caption{\small A block $B_y=M(P_y,Q_y)$, corresponding to a node $y$ of $\bar{\Gamma}$, is composed of the blocks $B_v=M(P_v,Q_v)$ and $B_w=M(P_w,Q_w)$ of the children $v,w$ of $y$, with $v$ being the left child and $w$ being the right child. We have $P_y=P_v= P_w=P$ and $Q_y=Q_v \cup Q_w$.
}
\label{fig:reachable}
\end{figure}

We therefore use the same reachability data structure $\Phi_v$ at $v$ as defined
in the previous subsection, except that we limit the input and output domains of
its maps to the vertical boundaries only. Recall our notation from Section~\ref{sec:preliminaries}, where the left (resp., right) vertical boundary of a block $B$ is denoted as $\bar{B}^-$ (resp., $\bar{B}^+$). Specifically, denoting the modified
structure as $\bar{\Phi}_v=\bar{\Phi(B_v)}$, it stores the following items.

\begin{enumerate}
\item
For each $1$-entry $i$ of $\bar{B}_v^-$ we store
\begin{enumerate}
\item
the first entry $\bar{\sigma}_A(i)$ of $\bar{B}_v^+$ that is reachable from $i$, and
\item
the last entry $\bar{\sigma}_Z(i)$ of $\bar{B}_v^+$ that is reachable from $i$.
\end{enumerate}
\item
For each $1$-entry $j$ of $\bar{B}_v^+$ we store
\begin{enumerate}
\item
a flag $\bar{f}(j)$ indicating whether $j$ is reachable from some entry of $\bar{B}_v^-$.
\item
a list $\bar{L}_A(j)$ of the $1$-entries $i\in \bar{B}_v^-$ such that $\bar{\sigma}_A(i)= j$, and
\item
a list $\bar{L}_Z(j)$ of the $1$-entries $i\in \bar{B}_v^-$ such that $\bar{\sigma}_Z(i)= j$.
\end{enumerate}
\end{enumerate}
In other words, $\bar{\Phi}(B_v)$ is a constrained variant of $\Phi(B_v)$, obtained by replacing $B_v^-$ and $B_v^+$ by $\bar{B}_v^-$ and $\bar{B}_v^+$, respectively.
The structure $\bar{\Phi}_v$ of the root $v$ of a child $\bar{\Gamma}_i$ of $\bar{\Gamma}$ is obtained from $\Phi_v$ by first setting, for each entry $i$ of $\bar{B}_v^-$ for which $\sigma_A(i)$ is in $\bar{B}_v^+$ and $\sigma_Z(i)$ is not in $\bar{B}_v^+$,  $\sigma_Z(i)$ to be the last reachable entry $k$ of $\bar{B}_v^+$, then updating $L_Z(k)$ accordingly, and finally ignoring the horizontal parts of the boundaries of $B_v$ and deleting the data regarding them from the lists $L_A$ and $L_Z$ of entries of $\bar{B}_v^+$.
%$\overline{\Gamma}$ is constructed and updated in a bottom-up manner similar to that of $\Gamma$.
%We show in Lemma~\ref{lem:constructGamma2} how to construct $\overline{\Gamma}$ in $O(mn)$ time and to update $\overline{\Gamma}$ in $O(m\log(n/m))$ time, assuming $m<n$. To this end, we first need to show that

We next claim that the modified structures $\bar{\Phi}_y$ are sufficient for obtaining reachability data for the blocks of $\bar{\Gamma}$, in the precise sense stated below, and that the structure $\bar{\Phi}_y$ at an inner node $y$ of $\bar{\Gamma}$ can be obtained from the structures at the children of $y$ in $O(m)$ time. Concretely, we have the following variants of Lemmas~\ref{lem:linear} and~\ref{lem:union}.

\begin{lemma}\label{lem:linear2}
Given the data structure $\bar{\Phi}(B)$ for a block $B$ of size $r\times c$., and given the entries of $\bar{B}^-$ that are reachable from $(1,1)$, we can determine, in $O(r)$ time, the entries of $\bar{B}^+$ that are reachable from $(1,1)$.
\end{lemma}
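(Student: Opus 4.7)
The plan is to replay the proof of Lemma~\ref{lem:linear} almost verbatim, with $B^-$ and $B^+$ replaced throughout by the vertical input boundary $\bar{B}^-$ and the vertical output boundary $\bar{B}^+$ (each of size $r$), and with $\sigma_A, \sigma_Z, f, L_A, L_Z$ replaced by their barred counterparts from $\bar{\Phi}(B)$. The only thing I really need to check is that the key monotonicity and interval properties (Observation~\ref{obs:monge} and Corollary~\ref{cor:overlapping_intervals}) remain valid in this restricted setting; the rest of the argument then goes through unchanged, and the running time drops to $O(r)$ because both relevant boundaries now have only $r$ entries.

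For the monotonicity, I would observe that $\bar{B}^-$ is exactly the suffix of the clockwise enumeration of $B^-$ consisting of the entries of $P_v \times \{q^-\}$ in bottom-to-top order, and $\bar{B}^+$ is the prefix of the counterclockwise enumeration of $B^+$ consisting of the entries of $P_v \times \{q^+\}$ in bottom-to-top order. Thus the ordering used in Observation~\ref{obs:monge} and Corollary~\ref{cor:overlapping_intervals} on $\bar{B}^-$ and $\bar{B}^+$ is merely the restriction of the ordering used on $B^-$ and $B^+$, and the two-path crossing argument in the proof of Observation~\ref{obs:monge} is unaffected. Consequently, for $i<j$ in $\bar{B}^-$ we still have $\bar{\sigma}_A(j) \geq \bar{\sigma}_A(i)$ and $\bar{\sigma}_Z(j) \geq \bar{\sigma}_Z(i)$, and the three-type characterization of $[\bar{\sigma}_A(i), \bar{\sigma}_Z(i)]$ implicit in Corollary~\ref{cor:reachable_interval} still holds (inside such an interval, any $1$-entry reachable from some entry of $\bar{B}^-$ is in fact reachable from $i$).

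With these facts in hand, I would run the same sweep as in Lemma~\ref{lem:linear}. Iterate over the reachable $1$-entries of $\bar{B}^-$ in order; for each such entry $i$, traverse the subinterval
\[
\bar{I}(i) \;=\; [\,\max\{\bar{\sigma}_Z(i^-), \bar{\sigma}_A(i)\},\, \bar{\sigma}_Z(i)\,]
\]
of $\bar{B}^+$, where $i^-$ denotes the previous reachable $1$-entry of $\bar{B}^-$ (and $\bar{I}(i) = [\bar{\sigma}_A(i), \bar{\sigma}_Z(i)]$ when $i$ is the first one). For every $1$-entry $j \in \bar{I}(i)$ with $\bar{f}(j) = \text{true}$, mark $j$ as reachable from $(1,1)$. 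By the three-type argument, any such $j$ is indeed reachable from $i$ (and hence from $(1,1)$); conversely, any $j \in \bar{B}^+$ not marked is either a $0$-entry, has $\bar{f}(j) = \text{false}$, or lies outside every $[\bar{\sigma}_A(i), \bar{\sigma}_Z(i)]$ associated with a reachable $i$, and so cannot be reachable from $(1,1)$.

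The step I expect to require the most care is not conceptual but accounting: I need the subintervals $\bar{I}(i)$ to tile $\bar{B}^+$ with pairwise disjoint interior portions so that the total traversal work stays linear. This is exactly what the monotonicity of $\bar{\sigma}_A$ and $\bar{\sigma}_Z$ buys, just as in Lemma~\ref{lem:linear}: the choice of lower endpoint $\max\{\bar{\sigma}_Z(i^-), \bar{\sigma}_A(i)\}$ skips over the portion already processed in the previous iteration. Combined with the fact that there are at most $r$ reachable entries of $\bar{B}^-$ to enumerate, this yields the desired $O(r)$ bound, independent of $c$.
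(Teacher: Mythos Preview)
Your proposal is correct and follows essentially the same approach as the paper: the paper's own proof of this lemma is a single sentence stating that it is ``essentially identical'' to that of Lemma~\ref{lem:linear}, restricting all maps to the vertical portions of the boundaries, and you have simply spelled out this restriction in detail. Your observation that $\bar{B}^-$ and $\bar{B}^+$ are, respectively, a suffix of the clockwise enumeration of $B^-$ and a prefix of the counterclockwise enumeration of $B^+$, so that Observation~\ref{obs:monge} and Corollaries~\ref{cor:reachable_interval} and~\ref{cor:overlapping_intervals} carry over verbatim, is exactly the justification the paper alludes to.
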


\begin{lemma}
\label{cor:Gamma+2}
Let $y$ be an inner node of $\bar{\Gamma}$ with left and right children $v$ and $w$.
Given the reachability data $\bar{\Phi}(M(P_v,Q_v)), \bar{\Phi}(M(P_w,Q_w))$, the data $\bar{\Phi}(M(P_y,Q_y))$ can be computed in $O(m)$ time.
\end{lemma}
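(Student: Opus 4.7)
The plan is to mirror the template of Lemma~\ref{lem:union} (for the vertical-common-boundary case) but working entirely with the compact vertical-only structure $\bar{\Phi}$. The key observation is that, with $v,w$ being the left/right children of $y$ in $\bar{\Gamma}$, the shared boundary $B_{vw}=\bar{B}_v^+=\bar{B}_w^-$ is a full column of exactly $m$ entries, and $\bar{B}_y^-=\bar{B}_v^-$, $\bar{B}_y^+=\bar{B}_w^+$ are also columns of $m$ entries each. Any monotone $1$-path in $B_y$ from $\bar{B}_y^-$ to $\bar{B}_y^+$ must cross $B_{vw}$, so it decomposes at some $1$-entry $k\in B_{vw}$ into a sub-path in $B_v$ from $\bar{B}_v^-$ to $k$ and a sub-path in $B_w$ from $k$ to $\bar{B}_w^+$. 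Every field of $\bar{\Phi}_y$ can therefore be produced by touching only the $3m$ entries on these three columns.

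Concretely, I will build $\bar{\Phi}_y$ in two passes. In the first pass I compute, for each $1$-entry $j\in\bar{B}_w^+$, the flag $\bar{f}^y(j)$ by sweeping $j$ upward and running the queue-based procedure from Lemma~\ref{lem:union}: a queue $\Q$ holds those entries $k\in B_{vw}$ with $\bar{f}^v(k)=\text{true}$ whose interval $[\bar{\sigma}_A^w(k),\bar{\sigma}_Z^w(k)]$ currently contains $j$; insertions are triggered by $\bar{L}_A^w(j)$ and deletions by $\bar{L}_Z^w(j)$, and $\bar{f}^y(j)$ is set true iff $\Q$ is nonempty when $j$ is processed. In the second pass I compute, for each $1$-entry $i\in\bar{B}_y^-$, the values $\bar{\sigma}_A^y(i),\bar{\sigma}_Z^y(i)$ by composing with $\bar{\sigma}_A^w,\bar{\sigma}_Z^w$ along $B_{vw}$. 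Corollary~\ref{cor:overlapping_intervals} guarantees that $\bar{\sigma}_A^v,\bar{\sigma}_Z^v$ are monotone non-decreasing in $i$, and that the same holds for $\bar{\sigma}_A^w,\bar{\sigma}_Z^w$ on $B_{vw}$, so the required minimum and maximum over the \emph{transitable} entries $k\in[\bar{\sigma}_A^v(i),\bar{\sigma}_Z^v(i)]$ (those with $\bar{f}^v(k)$ true and $\bar{\sigma}_A^w(k)$ defined) can be produced for all $m$ choices of $i$ by a single two-pointer sweep on $B_{vw}$. Once $\bar{\sigma}_A^y(i),\bar{\sigma}_Z^y(i)$ are known, I append $i$ to $\bar{L}_A^y(\bar{\sigma}_A^y(i))$ and $\bar{L}_Z^y(\bar{\sigma}_Z^y(i))$, and a final $O(m)$ sweep removes from the copied $\bar{L}_A^y,\bar{L}_Z^y$ those entries of $B_{vw}$ that do not lie on $\bar{B}_y$.

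The main subtlety is the ``dead-end'' case, where the naive composition $\bar{\sigma}_A^w(\bar{\sigma}_A^v(i))$ is undefined because $\bar{\sigma}_A^v(i)$ itself cannot escape to $\bar{B}_w^+$ through $B_w$, while some strictly later transitable $k\in[\bar{\sigma}_A^v(i),\bar{\sigma}_Z^v(i)]$ can. The two-pointer sweep over the sorted list of transitable entries of $B_{vw}$ dispatches this cleanly: for each $i$ it advances to the first transitable $k\ge\bar{\sigma}_A^v(i)$ lying in $[\bar{\sigma}_A^v(i),\bar{\sigma}_Z^v(i)]$ and, symmetrically, to the last such $k\le\bar{\sigma}_Z^v(i)$, exploiting the monotonicity of $\bar{\sigma}_A^v,\bar{\sigma}_Z^v$ in $i$ so that the combined pointer advancement over all $m$ values of $i$ is $O(m)$. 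Summing the costs of the two passes, all of which touch only the three vertical columns of size $m$, yields the claimed $O(m)$ bound.
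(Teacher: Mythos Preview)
Your argument is correct and follows the same route as the paper, which simply says that the procedure of Lemma~\ref{lem:union} carries over verbatim once every boundary is restricted to a single vertical column of size $m$. Two remarks are worth making. First, your explicit two-pointer treatment of the ``dead-end'' case---where $\bar\sigma_A^v(i)$ itself has no path into $\bar B_w^+$ but some later transitable $k\in[\bar\sigma_A^v(i),\bar\sigma_Z^v(i)]$ does---is a genuine refinement: the one-line composition $\sigma_A^y(i)=\sigma_A^w(\sigma_A^v(i))$ that the paper inherits from Lemma~\ref{lem:union} tacitly assumes this does not occur, so your sweep over the transitable entries of $B_{vw}$ (which works because $\bar\sigma_A^v,\bar\sigma_Z^v$ are monotone by Corollary~\ref{cor:overlapping_intervals}) is the right way to make the step airtight while staying within $O(m)$. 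Second, a small slip in your first pass: you must set $\bar f^y(j)$ true only when the queue is nonempty \emph{and} $\bar f^w(j)$ is already true, since a $1$-entry $j$ can lie inside some $[\bar\sigma_A^w(k),\bar\sigma_Z^w(k)]$ without being reachable from $k$ (type~3 in the trichotomy after Corollary~\ref{cor:reachable_interval}); this is exactly what the queue procedure of Lemma~\ref{lem:union} does by first copying the flag from $\Phi_w$ and then only turning it off.
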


The proof is essentially identical to those of Lemma~\ref{lem:linear} and Lemma~\ref{lem:union}, except that we restrict the domains and the images of each of the maps (i.e., $\sigma_A,\sigma_Z,f,L_A$, and $L_Z$) to the vertical portions of the boundaries. This is justified using the observation made earlier that all the reachability paths traverse only vertical boundaries of the relevant blocks --- those that are stored at $\bar{\Gamma}$, from its leaves up, which span the entire range $P$ of rows of $M$. Since we only traverse vertical boundaries, the cost of constructing $\bar{\Phi}_y$ from $\bar{\Phi}_v$ and $\bar{\Phi}_w$ is $O(m)$. $\Box$

The following lemma extends Lemma \ref{lem:constructGamma}.

\begin{lemma}
\label{lem:constructGamma2}
(a) Given the matrix $M$, $\bar{\Gamma}$ can be constructed in $O(mn)$ time. (b) If a single entry of $M$ is updated, then $\bar{\Gamma}$ can be updated in $O(m(1+\log(n/m)))$ time, assuming $m\leq n$. (c) Given $\bar{\Gamma}$, we can determine whether $(m,n)$ is reachable from $(1,1)$ in constant time.
\end{lemma}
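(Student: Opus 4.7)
}
The plan is to follow the template of Lemma~\ref{lem:constructGamma}, separating the work done inside each leaf structure $\bar{\Gamma}_i$ (which is a decomposition tree for an $m\times m$ block, so exactly the object handled in Section~\ref{sec:data}) from the work done at the balanced binary tree $T$ that glues them together. For part (a), I would first build each $\bar{\Gamma}_i$ from scratch by invoking Lemma~\ref{lem:constructGamma}(a), which takes $O(m^2)$ time per leaf. Since the number of leaves is $k=O(n/m)$, this phase contributes $O((n/m)\cdot m^2)=O(mn)$. Next, at the root of each $\bar{\Gamma}_i$ I would post-process the structure $\Phi_v$ into its constrained variant $\bar{\Phi}_v$ in $O(m)$ time, as described in Section~\ref{sec:improved} (resetting $\sigma_Z(i)$ to the last reachable entry of $\bar{B}_v^+$ when $\sigma_Z(i)$ lies on a horizontal boundary, updating the corresponding $L_Z$, and dropping the horizontal-boundary data from the $L_A,L_Z$ lists). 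Finally, I would traverse $T$ bottom-up, computing $\bar{\Phi}_y$ at each internal node $y$ from $\bar{\Phi}_v,\bar{\Phi}_w$ of its children in $O(m)$ time, using Lemma~\ref{cor:Gamma+2}. Since $T$ has $O(n/m)$ internal nodes, the total work in this merging phase is $O(n)$, so the overall construction cost is $O(mn)$.

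For part (b), the key observation is that an updated entry $e$ belongs to (at most) two consecutive leaf blocks $B_i,B_{i+1}$ (because consecutive leaf blocks overlap in a single column; for a generic entry only one block is affected). For each of these at most two leaves I would call Lemma~\ref{lem:constructGamma}(b) to rebuild the reachability data along the single root-to-leaf path of $\bar{\Gamma}_i$ that contains $e$, at a cost of $O(m)$ per leaf, and then redo the $O(m)$ conversion from $\Phi$ to $\bar{\Phi}$ at the root of that leaf structure. I would then walk from $v_i$ (and, if needed, $v_{i+1}$) up to the root of $T$, recomputing $\bar{\Phi}_y$ at every ancestor $y$ via Lemma~\ref{cor:Gamma+2}; the $\bar{\Phi}$ data of all other children encountered along the way are untouched and can be reused as black boxes. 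The path length in $T$ is $O(\log k)=O(\log(n/m))$, and each ancestor costs $O(m)$, yielding a total update cost of $O(m)+O(m\log(n/m))=O(m(1+\log(n/m)))$.

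For part (c), I would place $(1,1)$ and $(m,n)$ on the relevant vertical boundaries of the root block $B_r$ of $\bar{\Gamma}$: $(1,1)$ is the bottom entry of $\bar{B}_r^-$ and $(m,n)$ is the top entry of $\bar{B}_r^+$ (both are definitely $1$-entries, otherwise the answer is trivially no). To answer the reachability query, I would check at the root whether $(m,n)$ is a $1$-entry lying in the interval $[\bar{\sigma}_A((1,1)),\bar{\sigma}_Z((1,1))]$ stored in $\bar{\Phi}_r$ and whether the flag $\bar{f}((m,n))$ is true; by the definitions of these fields and the observation in Section~\ref{sec:improved} that every monotone path inside $M$ crosses each vertical boundary delimiting the leaf blocks, this test is equivalent to reachability. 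The test requires only $O(1)$ pointer dereferences.

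The step I expect to be most delicate is verifying that the restriction to vertical boundaries in the $\bar{\Phi}$ structures is consistent both within a leaf's $\bar{\Gamma}_i$ (where the original structure $\Phi$ from Section~\ref{sec:data} does keep horizontal-boundary data, and the conversion must be performed only at the root of $\bar{\Gamma}_i$ before exposing it to $T$) and across the gluing in $T$ (where Lemma~\ref{cor:Gamma+2} must be invoked with domains and images restricted to $\bar{B}^-,\bar{B}^+$). Once this bookkeeping is set up correctly, the three bounds follow by the straightforward summations above; the geometric series over levels of each $\bar{\Gamma}_i$ gives $O(m)$, and the merge path in $T$ contributes the extra $\log(n/m)$ factor.
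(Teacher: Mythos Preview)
Your proposal is correct and follows essentially the same approach as the paper's proof: build each leaf structure $\bar{\Gamma}_i$ via Lemma~\ref{lem:constructGamma}(a), extract $\bar{\Phi}$ at its root, merge bottom-up in $T$ via Lemma~\ref{cor:Gamma+2}, and for an update rebuild along the path inside the affected leaf block(s) and then up the path in $T$. The cost accounting and the constant-time check in part~(c) are identical to the paper's.
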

\begin{proof}
(a) We construct the structure $\bar{\Gamma}_i$
for each block $B_i$, and extract  $\bar{\Phi}_{v_i}$ from it.
We then construct $\bar{\Phi}_{y}$ for each inner node $y\in T$ by merging the corresponding data structures
of the children of $y$ in $O(m)$ time. We obtain  $\bar{\Gamma}$ at the root of $T$.
Since $T$ is of size $O(n/m)$ and we spend $O(m)$ time at each block,
 it takes $O(n)$ time to construct $\bar{\Gamma}$ from the leaf structures $\bar{\Gamma}_i$ for $i=1,\ldots,k$.
The cost of constructing each $\bar{\Gamma}_i$, $i=1,\ldots,k=\lceil n/m\rceil$, is $O(m^2)$, by Lemma~\ref{lem:constructGamma}, for a total of $O(km^2)=O(mn)$. It follows that
that the overall construction of $\bar{\Gamma}$ takes $O(n+mn)=O(mn)$ time.

\smallskip\noindent(b) To update $\bar{\Gamma}$ when an entry $e$ of $M$ changes
we first need to update the reachability structures along a path in the structure of the block $B_i$ containing $e$ (If $e$ is in the common column of two blocks we update both structures.).
This takes $O(m)$ time by Lemma \ref{lem:constructGamma}.
Once we have the updated $\bar{\Gamma}_i$
we update
 the reachability structures along the  path $\pi$ of $T$ of those nodes $y$ for which $e\in B_y$. (There are two such paths if $e$ is in the common column of two consecutive blocks.)
Since the depth of $T$ is $O(\log(n/m))$ and we spend $O(m)$ time to reconstruct the structure at each node of $T$, we update $T$
in $O(m(1+\log(n/m)))$ time.

\smallskip\noindent(c) As in Lemma~\ref{lem:constructGamma}, to determine whether $(m,n)$ is reachable from $(1,1)$, we simply check in the reachability data structure $\bar{\Phi}(M)$ of the root of $\bar{\Gamma}$ whether $(m,n)\in [\bar{\sigma}_A((1,1)), \bar{\sigma}_Z((1,1))]$ and the flag $\bar{f}((m,n))$ is true.
\end{proof}

\subsection{The overall decision procedure}
\label{sec:overall}
We now put together the pieces of the decision procedure.
%\rinat{We need to go over vertices (and not the faces) of the arrangement only for the parallel simulation in the parametric search. To save space, I said it in Appendix~\ref{sec:apx:optimization}, and here I described the procedure in terms of the faces.}
We construct the arrangement $\A_\delta$ of the disks $D_\delta(p_i-q_j)$ as in Section~\ref{sec:arrangement} in $O(m^2n^2\log(m+n))$ time. We pick an arbitrary ($0$-, $1$-, or $2$- dimensional) face $f_0$ of $\A_\delta$. $f_0$ corresponds to a unique matrix $M(P,Q+f_0)$ and we construct the data structure $\bar{\Gamma}$ of Section~\ref{sec:improved} based on $M(P,Q+f_0)$. We then perform a traversal of the entire arrangement $\A_\delta$. In each step of the traversal we move from a face $f$ of $\A_\delta$ to a neighbor face $f'$ (both faces are of any dimension $0$, $1$, or $2$). In this step, we either enter a single disk of $\A_\delta$ or exit a single disk of $\A_\delta$. This corresponds to a change in a single entry of $M(P,Q+f)$.
%We compute all the instruction points of the circles $C_\delta(p_i-q_j)$, and sort them along each circle, in total time $O(m^2n^2\log(m+n))$. We then process each circle separately. Fix a circle $C=C_\delta(p_i-q_j)$. If $C$ does not intersect any other circle, pick an arbitrary point $t\in C$ and test whether $\delta^*(P,Q_t)\leq \delta$. Otherwise let the sequence of the intersection points along $C$ be $(t_1,\ldots,t_k)$. Start at $t_1$ and test whether $\delta^*(P,Q+t_1)\leq\delta$. If not, move to $t_2$, and observe that $M(P,Q+t_2)$ differs from $M(P,Q+t_1)$ in one of the following ways: either a single $1$-entry becomes $0$, or a single $0$-entry becomes $1$, or both changes occur, or no change occurs. (If $t_1$ or $t_2$ is the unique degenerate vertex, of either a tangency or a triple intersection, the possible changes are slightly more involved but there are still only $O(1)$ of them.)
We update $\bar{\Gamma}$ accordingly, in time $O(m(1+\log(n/m)))$, and thereby determine whether $\delta^*(P,Q+f')\leq\delta$. We continue in this manner till we process the entire arrangement. If we encounter a face $f$ along the traversal at which $\delta^*(P, Q+f)\leq \delta$ we report that the minimum distance under translation is $\leq \delta$, and otherwise we report that the minimum distance is $>\delta$.
We thus obtain the following intermediate result.

\begin{theorem}\label{th:decision}
Let $P$, $Q$ be two sequences of points in $\reals^2$ of
sizes $m$ and $n$, respectively and let $\delta >0$ be
a parameter. Then the decision problem, where we want to determine
whether there exists a translation $t\in \mathbb{R}^2$ such that $\dfrechet(P,Q+t) \le \delta$, can be solved in
$O(m^3n^2(1+\log(n/m)))$ time, assuming that $m\le n$.
\end{theorem}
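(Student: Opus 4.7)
The plan is to assemble the geometric and dynamic pieces developed in Sections~\ref{sec:arrangement} and~\ref{sec:improved} into a single decision procedure. First I would construct the arrangement $\A_\delta$ of the $mn$ disks $D_\delta(p_i-q_j)$; it has combinatorial complexity $O(m^2n^2)$ and can be built in $O(m^2n^2\log(m+n))$ time by a standard sweep for arrangements of circles. I would then pick an arbitrary seed face $f_0\in\A_\delta$ of any dimension, determine $M(P,Q+f_0)$ in $O(mn)$ time by testing for each of the $mn$ disks whether it contains a chosen representative point of $f_0$, and build the dynamic reachability structure $\bar{\Gamma}$ for this matrix in $O(mn)$ time using Lemma~\ref{lem:constructGamma2}(a).

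Next I would traverse all faces of $\A_\delta$ --- $2$-faces, edges, and vertices --- via a DFS on the incidence graph of $\A_\delta$, visiting each face $O(1)$ times and performing $O(m^2n^2)$ transitions in total (the number of face-subface incidences in an arrangement of $mn$ circles is $O(m^2n^2)$). Under the general-position hypothesis of Section~\ref{sec:arrangement}, each edge of $\A_\delta$ lies on exactly one disk boundary and each vertex on exactly two, so every elementary transition between incident faces alters only $O(1)$ entries of $M$: crossing an edge between two $2$-faces flips precisely one entry (membership in the disk whose boundary carries that edge), while moving between a $2$-face and an incident edge, or between an edge and an incident vertex, flips at most one entry. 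For each such flip I apply Lemma~\ref{lem:constructGamma2}(b) to update $\bar{\Gamma}$ in $O(m(1+\log(n/m)))$ time, and then invoke Lemma~\ref{lem:constructGamma2}(c) to decide in $O(1)$ time whether $(m,n)$ is reachable from $(1,1)$ in the current matrix. The first positive answer certifies a translation $t$ with $\dfrechet(P,Q+t)\le\delta$; if the entire traversal is negative we report that no such translation exists.

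For correctness, every candidate translation $t\in\mathbb{R}^2$ lies in the relative interior of a unique face $f$ of $\A_\delta$, and by the defining property of $\A_\delta$ we have $M(P,Q+t)=M(P,Q+f)$. Hence a certifying translation exists if and only if some face $f$ (of any dimension) satisfies that $(m,n)$ is reachable from $(1,1)$ in $M(P,Q+f)$. Including the $0$- and $1$-dimensional faces in the traversal is essential, since at a critical value of $\delta$ a certifying translation can be forced to lie only on such a degenerate face. Summing the costs, the $O(m^2n^2)$ single-entry updates dominate both the arrangement construction and the initial setup, giving total running time $O(m^2n^2)\cdot O(m(1+\log(n/m)))=O(m^3n^2(1+\log(n/m)))$, matching the claimed bound.

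The hard part will be the bookkeeping of the DFS on $\A_\delta$: ensuring that every step between incident faces corresponds to the correct $O(1)$ entry flips of $M$, and that the state stored in $\bar{\Gamma}$ always matches the face currently being visited. I plan to handle this by walking on the doubly-connected edge list of $\A_\delta$ across one disk boundary at a time, and by explicitly undoing each entry flip upon backtracking, so that the matrix implicitly encoded in $\bar{\Gamma}$ stays in lockstep with the current face throughout the traversal.
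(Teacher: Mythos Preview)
Your proposal is correct and follows essentially the same approach as the paper's own proof in Section~\ref{sec:overall}: construct $\A_\delta$, initialize $\bar{\Gamma}$ at a seed face, and traverse the arrangement face by face, flipping one entry of $M$ per transition and invoking Lemma~\ref{lem:constructGamma2}(b),(c) after each flip. Your version is in fact more explicit than the paper's about the DFS on the incidence graph and the undo-on-backtrack mechanism, and your verification that the $O(m^2n^2\log(m+n))$ arrangement-construction cost is dominated by the $O(m^3n^2(1+\log(n/m)))$ update cost is a detail the paper leaves implicit.
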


\section{The optimization procedure}
\label{sec:optimization}
We now show how to use the decision procedure of Section~\ref{sec:dynamic} to compute the minimum discrete Fr\'echet distance under translation. 
 Assume without loss of generality that $m\leq n$. As we increase $\delta$, the disks $D_\delta(p_i-q_j)$ expand, and their arrangement $\A_\delta$ varies accordingly. Nevertheless, except for a discrete set of critical values of $\delta$, the combinatorial structure of $\A_\delta$ does not change. That is, the pairs of intersecting disk boundaries remain the same, all their intersection points remain distinct and vary continuously, and no pair of disks are tangent to each other. Consequently, the representation of $\A_\delta$ that we use, namely, a collection of circular sequences of vertices, each containing the vertices of $\A_\delta$ along some circle $C_\delta(p_i-q_j)$, for $p_i\in P$, $q_j\in Q$, sorted along the circle, remain unchanged. The critical values of $\delta$, at which this representation of $\A_\delta$ changes qualitatively, are
\begin{enumerate}
\item\label{enum:critical1}
The radii of the disks that have three points of $P - Q = \{p_i-q_j\mid p_i\in P,\; q_j\in Q\}$ on their boundaries.
\item\label{enum:critical2}
The half-distances between pairs of points of $P-Q$.
\end{enumerate}

There are $O(m^3n^3)$ critical values (most of which are of type \ref{enum:critical1}), so we cannot afford to enumerate them and run an explicit binary search to locate the optimal value of $\delta$ among them.

Instead, we use the parametric searching technique of~\cite{NM83}. In general, using parametric searching can be fairly complicated, since it is based on a simulation of a parallel version of the algorithm. However, we only have to simulate, by a parallel algorithm, the part of the decision procedure that depends on (the unknown value of) the optimum $\delta_T^* = \min_t \dfrechet(P,Q+t)$. In our case, this portion is the construction of $\A_\delta$. 

Instead of actually constructing $\A_\delta$, we first observe that it suffices to restrict our attention to vertices of $\A_\delta$, in the sense that each face $f$ of $\A_\delta$ has a vertex $\xi$, such that all the $1$-entries of $M(P,Q+f)$ are also $1$-entries of $M(P,Q+\xi)$ (the latter matrix can contain additional $1$-entries), so it suffices to test for reachability in the matrices $M(P,Q+\xi)$ associated with vertices $\xi$ of $\A_\delta$. (Technically, we add to the set of vertices one additional point, say the rightmost point, on each disk boundary, to cater to faces that have no real vertices.)

Hence, our parallel implementation of the algorithm will only simulate the construction of the sorted lists of vertices along each of the circles $C_\delta(p_i-q_j)$. Recall that during the parametric searching simulation, we collect comparisons that the decision procedure performs and that depend on $\delta$, and resolve them. This is done by finding the critical values of $\delta$ at which the outcome of some comparison changes, during a single (simulated) parallel step of the algorithm and then by running a binary search through these critical values of $\delta$, guided by the decision procedure of Theorem~\ref{th:decision}. In this manner, we maintain a shrinking half-open interval $I=(\alpha,\beta]$ of values of $\delta$ that contains $\delta_T^*$. Note that we have called the decision procedure at $\alpha$ and it has determined that $\delta_T^*>\alpha$. Then, as is easily seen, $\delta_T^*$ must be at least as large as the first critical value of $\delta$ within $I$ (and it cannot be arbitrarily close to $\alpha$). Assume that we have simulated the construction of $\A_\delta$, and obtained a half-open interval range $I=(\alpha,\beta]$ of $\delta$ that contains $\delta_T^*$. That is, we know that $\alpha<\delta_T^*\leq\beta$, and we know the sorted sequences of vertices of $\A_{\delta_T^*}$ along each circle $C_{\delta_T^*}(p_i-q_j)$. None of the comparisons that the decision procedure has performed has a critical value inside $I$, other than those comparisons that have produced ($\alpha$ and) $\beta$. Hence the output representation of $\A_\delta$ is fixed in the interior of $I$. The rest of the algorithm, which constructs the structure $\bar{\Gamma}$, traverses the vertex sequences along the circles $C_{\delta_T^*}(p_i-q_j)$, and dynamically updates the reachability data, is purely combinatorial, and does not introduce new critical values (i.e., does not involve comparisons that depend on $\delta_T^*$), so there is no need to run it at all. Since the decision procedure fails at $\alpha$ and succeeds at $\beta$, it follows that $\delta_T^*=\beta$.

It is thus sufficient to simulate, at the unknown $\delta_T^*$, an algorithm that
\begin{enumerate}
\item\label{enum:intersection}
finds the intersection points of each circle $C_{\delta_T^*}(p_i - q_j)$ with the circles $\{C_{\delta_T^*}(p_k-q_l)\mid p_k\in P, q_l\in Q\}$, other than itself, and
\item\label{enum:sorting}
sorts, for each circle $C_{\delta_T^*}(p_i-q_j)$, the intersection points that were found on its boundary in step \ref{enum:intersection}, along this boundary.
\end{enumerate}
During the simulation we progressively shrink an interval $I=(\alpha,\beta] \subseteq \mathbb{R}$ that is known to contain $\delta_T^*$. We start with $I = (0,\infty]$.

We first obtain all the $O(m^2n^2)$ critical values of type \ref{enum:critical2}, sort them, and run an explicit  binary search among them guided by the decision procedure. (This part requires no parametric simulation.) As a result $I$ is shrunk to an interval $(\alpha,\beta]$, where $\alpha,\beta$ are two consecutive critical values of type \ref{enum:critical2}. This takes $O(m^2n^2\log(m+n)+m^3n^2(1+\log(n/m))\log(m+n))= O(m^3n^2(1+\log(n/m))\log(m+n))$ time. We can now accomplish step \ref{enum:intersection}, because the property that a pair of circles $C_\delta(p-q), C_\delta(p'-q')$ intersect either holds for all $\delta\in(\alpha,\beta)$ or does not hold for any such $\delta$.

We then execute step \ref{enum:sorting}. The task at hand is to sort, for each circle $C_{\delta_T^*}(p_0-q_0)$, the resulting fixed set of intersection points along $C_{\delta_T^*}(p_0-q_0)$. For each pair $C_{\delta_T^*}(p-q), C_{\delta_T^*}(p'-q')$ of such circles, the order of the intersection points can change only at the radius $\tilde{\delta}$ of the circumcircle of $p_0-q_0,p-q,p'-q'$. We then simulate a parallel sorting procedure, to sort these intersection points along $C_{\delta_T^*}(p_0-q_0)$, and run it in parallel over all these circles. We omit the (by now) routine details of this simulation (see, e.g.,~\cite{AG95} for similar application of parametric searching). They imply that we can simulate this sorting, for each circle $C_{\delta_T^*}(p_i-q_j)$, using $O(mn)$ processors and $O(\log (mn))=O(\log(m+n))$ parallel steps (for a total of $O(m^2n^2)$ processors). Thus, for each parallel step, we need to resolve $O(m^2n^2)$ comparisons, each of which compares $\delta_T^*$ to a critical circumradius of type \ref{enum:critical1}. We run a binary search among these critical values using the decision procedure. This takes $O(m^3n^2(1+\log(n/m))\log(m+n))$ time for each parallel step, for an overall $O(m^3n^2(1+\log(n/m))\log^2(m+n))$ time for $O(\log (m^2n^2))=O(\log(m+n))$ steps.
To (slightly) improve this running time we use the improvement of Cole~\cite{RC87} which finds, for each parallel step, the (weighted) median of the (suitably weighted) unresolved critical values involved in this step, and calls the decision procedure only at this value, instead of using a complete binary search. This allows us to resolve comparisons that contribute at least some fixed fraction of the total weight, while the other unresolved critical values are carried over to the next step with their weights increased. Proceeding in this manner, we make only one call to the decision procedure at each parallel step, and add only $O(\log (m+n))$ parallel steps to the whole procedure. We thus obtain an overall algorithm with $O(m^3n^2(1+\log(n/m))\log(m+n))$ running time.

In conclusion, we get the following main result of the paper.

\begin{theorem} \label{th:optimization}
Let $P$, $Q$ be two sequences of points in $\reals^2$ of respective
sizes $m$ and $n$, where $m\leq n$. Then the minimum discrete Fr\'echet distance under translation between $P$ and
$Q$ can be computed
in $O(m^3n^2(1+\log(n/m))\log(m+n))$ time.
\end{theorem}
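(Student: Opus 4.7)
The plan is to combine the decision procedure of Theorem~\ref{th:decision} with Megiddo's parametric searching technique (using Cole's refinement) to locate the optimum $\delta_T^* = \min_t \dfrechet(P,Q+t)$ without enumerating all critical radii. The set of critical values of $\delta$ at which the combinatorial representation of $\A_\delta$ changes splits naturally into two types: the $O(m^2n^2)$ half-distances between pairs of points in $P-Q$ (at which a new pair of disk boundaries becomes tangent), and the $O(m^3n^3)$ circumradii through triples of points in $P-Q$ (at which the cyclic order of intersection points along a single circle changes). The former set is small enough to handle by brute force; the latter is where parametric searching becomes essential.

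First I would enumerate all type-2 critical values, sort them, and run an explicit binary search guided by the decision procedure of Theorem~\ref{th:decision}. Since each call costs $O(m^3n^2(1+\log(n/m)))$ time, this phase runs in $O(m^3n^2(1+\log(n/m))\log(m+n))$ time and narrows the search interval to some half-open $(\alpha,\beta]$ that contains $\delta_T^*$ and no type-2 critical value in its interior. Throughout $(\alpha,\beta)$ the set of pairs of circles that intersect, and hence the \emph{identities} of the intersection points along each circle, are fixed; only their cyclic \emph{order} can still vary.

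Next I would observe that the only $\delta$-dependent work in the decision procedure is the construction of $\A_\delta$, and in fact only the sorting of the intersection points along each circle $C_\delta(p_i-q_j)$; once these sorted sequences are fixed, building $\bar{\Gamma}$, traversing the vertices of $\A_\delta$, and updating reachability are all purely combinatorial. So it suffices to simulate, at the unknown $\delta_T^*$, a parallel sorting algorithm on the $O(mn)$ intersection points along each of the $mn$ circles. Using a standard parallel sorter (e.g., AKS or Cole's sort) this takes $O(\log(m+n))$ parallel steps with $O(mn)$ comparators per circle, for a total of $O(m^2n^2)$ comparators. Each comparison reduces to comparing $\delta_T^*$ against a circumradius of type~1.

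The delicate part will be keeping the total cost at $O(m^3n^2(1+\log(n/m))\log(m+n))$ rather than $O(m^3n^2(1+\log(n/m))\log^2(m+n))$: a naive binary search at each of the $O(\log(m+n))$ parallel steps over the $O(m^2n^2)$ comparisons incurs an extra log factor. To avoid this, I would invoke Cole's weighting trick~\cite{RC87}: at each parallel step, weight the unresolved comparisons, call the decision procedure only at their weighted median, and observe that each call resolves a constant fraction of the total weight, so only $O(\log(m+n))$ additional decision-procedure calls are needed across the whole simulation. Finally, since the decision procedure fails at $\alpha$ and succeeds at $\beta$ and no further critical value lies strictly inside $(\alpha,\beta]$, one concludes $\delta_T^*=\beta$. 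Summing the cost of the explicit type-2 search, the $O(\log(m+n))$ decision-procedure calls from Cole's scheme, and the combinatorial work gives the claimed $O(m^3n^2(1+\log(n/m))\log(m+n))$ bound.
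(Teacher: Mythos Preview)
Your proposal is correct and follows essentially the same approach as the paper: an explicit binary search over the $O(m^2n^2)$ half-distance critical values, followed by parametric searching over a parallel sort of the intersection points along each circle (with Cole's weighted-median refinement to shave the extra $\log$), and the observation that the remainder of the decision procedure is purely combinatorial. The only minor omission is the paper's technical remark that one must also add a single artificial vertex on each disk boundary to handle faces of $\A_\delta$ that have no true vertices, but this does not affect the argument or the running time.
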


%\medskip
%\noindent\textbf
\paragraph{Discussion.}
Our algorithm is composed of two main parts. The first part is the construction of the subdivision $\A_\delta$ whose complexity is $O(m^2 n^2)$.
The challenge here is either to argue that, in favorable situations, the actual complexity of $\A_\delta$ is $o(m^2n^2)$, or be able to process only a portion of $\A_\delta$ that has $o(m^2n^2)$ complexity. Here is a simple illustration of such an approach. Consider the case where $P$ and $Q$ are sampled along a pair of \emph{$c$-packed curves}, where a curve $\gamma$ is $c$-packed if, for every disk $D$, the length of $\gamma\cap D$ is at most $c$ times the radius of $D$. Assume also that the sampling is more or less uniform, so that the distance between any pair of consecutive points of $P$ or of $Q$ is roughly some fixed value $\Delta$. We may assume, without loss of generality that $p_1=q_1$. Consider the decision procedure with a given parameter $\delta$, and observe that if $t$ is any translation for which $\delta^*(P,Q+t)\leq \delta$ then $\|t\|\leq \delta$. Therefore, for each $p_i\in P$, the only points $q_j\in Q$ that can align with $p_i$ during a simultaneous traversal of $P$ and $Q+t$, for any such ``good'' translation $t$, are those at distance $\leq 2\delta$ from $p_i$. The assumptions on $P$ and $Q$ imply that the number of such points is at most roughly $2\delta c/\Delta$. That is, instead of constructing the entire arrangement $\A_\delta$, it suffices to construct a coarser arrangement, involving only roughly $2c\delta m/\Delta$ disks. Then, traversing the coarser arrangement is done as before, where each update step (and the following reachability query) cost $O(m\log(1+\log(n/m)))$ time, assuming that $m\leq n$. This improves the running time of the decision procedure to $O\left(\left(\frac{2c\delta}{\Delta}\right)^2 m^3 (1+\log(n/m))\right)$, assuming that $m\leq n$ and $\delta<\frac{\Delta n}{2c}$.

Given this decision procedure, we can solve the optimization problem using parametric searching. However, to ensure that the decision procedure does not become too expensive, we want to run it only with values $\delta=O(\delta^*_T)$. This will become significant only when $\delta^*_T\leq \delta_0:=\frac{n\Delta}{2c}$; otherwise the running time will be close to the running time of the algorithm of Section~\ref{sec:optimization}. %We need to fine-tune the parametric searching, to ensure that we do not consider values of $\delta$ that are too close to $\delta_0$. 
%If $\delta$ is close to $\frac{n\Delta}{2c}$, the running time of this decision procedure is close to %$n^2 m^3 (1+\log(n/m))$, 
%the running time bound of the decision procedure of Section~\ref{sec:dynamic}. 
Therefore, in the following we describe how to solve the optimization problem assuming that $\delta_T^*< \delta_0$ (if the following procedure fails, we run the algorithm of Section~\ref{sec:optimization}).
We also assume, for now, that $\delta_T^*>\Delta$ (we explain below how  the case where $\delta_T^*\leq\Delta$ is dealt with).
We consider the interval $(\Delta, \delta_0)$ that is assumed to contain $\delta_T^*$, and run an ``exponential search'' through it, calling the decision procedure with the values $\delta_i = 2^i \cdot\Delta$, for $i = 0,1,2, \ldots$, in order, until the first time we reach a value $\delta' = \delta_i \geq \delta_T^*$ (and $\delta'<\delta_0$). Note that the cost of running the decision procedure at  $\delta'$ and at $\delta_T^*$ differ by at most a factor of $4$, so the cost of running the decision procedure at $\delta'$ is asymptotically the same as at $\delta_T^*$. Moreover, since the running time bounds on the executions of the decision procedure at $\delta_1,\ldots,\delta_i$ form a geometric sequence, the overall cost of the exponential search is also asymptotically the same as the cost of running the decision procedure at $\delta_T^*$. We then run the parametric searching technique as above, with the constraint that $\delta_T^*$ is at most $\delta'$ (i.e., we set $\delta'$ as the minimal $\beta$ obtained so far). Hence, from now on, each call to the decision procedure made by the parametric searching, will cost no more than the cost of calling the decision procedure with $\delta'$ (which is asymptotically the same as calling the procedure with $\delta_T^*$). We thus obtain an overall algorithm with $O\left(\left(\frac{c\delta_T^*}{\Delta}\right)^2 m^3 (1+\log(n/m))\log(m+n)\right)$ running time.
Note that, in the case where $\delta_T^*\leq\Delta$, after running the decision procedure with $\delta=\Delta$, we realize that $\delta_T^*\leq \Delta$, and run the parametric searching technique with the constraint that $\delta_T^*$ is at most $\Delta$. In this case, the running time of the algorithm is $O\left(c^2 m^3 (1+\log(n/m))\log(m+n)\right)$.

%Then we consider the interval $(0, \delta_0)$ which contains $\delta_T^*$, and run an ``exponential search'' through it, calling the decision procedure with the values $\delta_i = \delta_0 \left(1-\frac{1}{2^i}\right)$, for $i = 1,2, \ldots$, in order, until the first time we reach a value $\delta' = \delta_i \geq \delta^*$. Note that $1- \frac{\delta'}{\delta_0} = \frac{1}{2^i}$ and $\frac{1}{2^i} < 1- \frac{\delta^*}{\delta_0} < \frac{1}{2^{i-1}}$, so our lower bound estimates for the separation ratio $\beta$ at $r'$ and at $r^*$ differ by at most a factor of $2$, so the cost of running $\Gamma$ at $r'$ is asymptotically the same as at $r^*$. Moreover, since the (constants of proportionality in the) running time bounds on the executions of $\Gamma$ at $r_1,\ldots,r_i$ form a geometric sequence, the overall cost of the exponential search is also asymptotically the same as the cost of running $\Gamma$ at $r^*$. We then run Chan's technique, with the constraint that $r^*$ is at most $r'$ (i.e., we set $r'$ as the minimal radius obtained so far). Hence, from now on, each call to $\Gamma$ made by Chan's technique, will cost no more than the cost of calling $\Gamma$ with $r'$ (which is asymptotically the same as calling $\Gamma$ with $r^*$). 

%The optimization algorithm can be consequently improved to have  $O\left(\left(\frac{2c\delta}{\Delta}\right)^2 m^3 (1+\log(n/m))\log(m+n)\right)$ running time, under the same assumptions.

The second part of the algorithm presented in this work is the dynamic data structure for maintaining reachability in $M$. It is an open question of independent interest whether this data structure can be improved. A related problem is whether the techniques used in our structure can be extended to the general case of reachability in planar directed graphs, so as to simplify and improve the efficiency of the earlier competing method of Diks and Sankowski~\cite{DS07}.


\begin{thebibliography}{}

\bibitem{ABKS12} P.~K.~Agarwal, R.~Ben Avraham, H.~Kaplan and
M.~Sharir, Computing the discrete Fr\'echet distance in subquadratic
time, \textit{SIAM J. Comput.} 43(2)
(2014), 429--449, and in arxiv:1204.5333 (2012).

%\bibitem{AS98}
%P.~K.~Agarwal, M.~Sharir,
%Efficient algorithms for geometric optimization,
%{\it ACM Computing Surveys} 30(4) (1998), 412--458.

\bibitem{ABB95}
H.~Alt, B.~Behrends and J.~Bl\"{o}mer, Approximate matching of
polygonal shapes, {\it Ann. Math. Artif. Intell.} 13(3-4) (1995),
251--265.

\bibitem{AG95} H.~Alt and M.~Godau, Computing the Fr\'echet distance
between two polygonal curves, \textit{Internat. J. Comput. Geom. Appl.}
5 (1995), 75--91.

\bibitem{AKW01}
H.~Alt, C.~Knauer and C.~Wenk,
Matching polygonal curves with respect to the Fr\'echet distance,
{\it Proc. 18th Annu. Sympos. Theo. Asp. Comput. Sci.}
(2001), 63--74.

\bibitem{Bri14}
K.~Bringmann,
Why walking the dog takes time: Fr\'echet distance has no strongly subquadratic algorithms unless SETH fails,
\textit{Proc. 55th Annu. IEEE Sympos. Found. Comput. Sci.} (2014), %\rinat{appeared in october. I cant find page numbers online yet.}
and in
arxiv:1404.1448 (2014).

\bibitem{BBMM12} K.~Buchin, M.~Buchin, W.~Meulemans and W.~Mulzer,
Four soviets walk the dog --- with an application to Alt's conjecture,
\textit{Proc. 25th Annu. ACM-SIAM Sympos. Discrete Algorithms} (2014), 1399--1413.

\bibitem{RC87}
R.~Cole,
Slowing down sorting networks to obtain faster sorting algorithms,
{\it J. ACM} 34 (1987), 200--208.

%\bibitem{CSSS89}
%R.~Cole, J.~Salowe, W.~Steiger and E.~Szemer\'edi,
%An optimal time algorithm for slope selection,
%{\it SIAM J. Comput.} 18 (1989), 792--810.

\bibitem{DS07}
K.~Diks and P.~Sankowski,
Dynamic plane transitive closure,
{\it Proc. 15th Annu. European Sympos. Algorithms}
(2007), 594--604.

\bibitem{EM94} T.~Eiter and H.~Mannila, Computing discrete Fr\'echet
distance, Technical Report CD-TR 94/64, Christian Doppler Laboratory
for Expert Systems, TU Vienna, Austria, 1994.

\bibitem{JXZ08}
M.~Jiang, Y.~Xu and B.~Zhu,
Protein structure-structure alignment with discrete Fr\'echet distance,
{\it J. Bioinform. Comput. Biol.},
6(1) (2008), 51--64.




\bibitem{NM83}
N.~Megiddo,
Applying parallel computation algorithms in the design of serial
algorithms,
{\it J. ACM} 30 (1983), 852--865.

\bibitem{MC05}
A.~Mosig and M.~Clausen,
Approximately matching polygonal curves with respect to the Fr\'echet distance,
{\it Comput. Geom.}
30(2) (2005), 113--127.

\bibitem{Sub93}
S.~Subramanian,
A fully dynamic data structure for reachability in planar digraphs,
{\it Proc. 1st Annu. European Sympos. Algorithms}
(1993), 372--383.

\bibitem{Wenk02}
C.~Wenk,
{\it Shape matching in higher dimensions},
PhD thesis, Freie Universitaet Berlin
(2002).

\end{thebibliography}
\end{document}